\newcommand{\ra}{\rightarrow}
\newcommand{\RG}{K_r}
\newtheorem{observation}{Observation}
\newcommand{\p}{{\rm P}}
\newcommand{\np}{{\rm NP}}
\newcommand{\conp}{{\rm coNP}}
\begin{document}
\title{The Complexity of $(P_k, P_\ell)$-Arrowing}
%
%
\author{Zohair Raza Hassan\and
Edith Hemaspaandra \and
Stanis\l aw  Radziszowski}
\authorrunning{Z.R. Hassan, E. Hemaspaandra, S. Radziszowski}
%
\institute{Rochester Institute of Technology, Rochester NY 14623, USA \\ \email{zh5337@rit.edu}, \email{\{eh,spr\}@cs.rit.edu}}
\maketitle              
\begin{abstract}
For fixed nonnegative integers $k$ and $\ell$, the $(P_k, P_\ell)$-Arrowing problem asks whether a given graph, $G$, has a red/blue coloring of $E(G)$ such that there are no red copies of $P_k$ and no blue copies of $P_\ell$. The problem is trivial when $\max(k,\ell) \leq 3$, but has been shown to be coNP-complete when $k = \ell = 4$. In this work, we show that the problem remains coNP-complete for all pairs of $k$ and $\ell$, except $(3,4)$, and when $\max(k,\ell) \leq 3$. 

Our result is only the second hardness result for $(F,H)$-Arrowing for an infinite family of graphs and the first for 1-connected graphs. Previous hardness results for $(F, H)$-Arrowing depended on constructing graphs that avoided the creation of too many copies of $F$ and $H$, allowing easier analysis of the reduction. This is clearly unavoidable with paths and thus requires a more careful approach. We define and prove the existence of special graphs that we refer to as ``transmitters.'' Using transmitters, we construct gadgets for three distinct cases: 1) $k = 3$ and $\ell \geq 5$, 2) $\ell > k \geq 4$, and 3) $\ell = k \geq 4$. For $(P_3, P_4)$-Arrowing we show a polynomial-time algorithm by reducing the problem to 2SAT, thus successfully categorizing the complexity of all $(P_k, P_\ell)$-Arrowing problems. 
\keywords{Graph arrowing \and Ramsey theory \and Complexity.}
\end{abstract}

\section{Introduction and Related Work}

Often regarded as the study of how order emerges from randomness, Ramsey theory has played an important role in mathematics and computer science; it has applications in several diverse fields, including, but not limited to, 
game theory, information theory, and approximation algorithms~\cite{rosta}.
A key operator within the field is the arrowing operator:
given graphs $F, G$, and $H$, we say that $G \ra (F, H)$ (read, $G$ \textit{arrows} $F, H$) if every red/blue edge-coloring of $G$'s edge contains a red $F$ or a blue $H$. 
In this work, we categorize the computational complexity of evaluating this operator when $F$ and $H$ are fixed path graphs. The problem is defined formally as follows.

\begin{problem}[$(F, H)$-Arrowing]
Let $F$ and $H$ be fixed graphs.
Given a graph $G$, does $G \ra (F, H)$?
\end{problem}

The problem is clearly in coNP; a red/blue coloring of $G$ with no red $F$'s and no blue $H$'s forms a certificate that can be verified in polynomial time since $F$ and $H$ are fixed graphs.
We refer to such a coloring as an $(F, H)$-good coloring.
The computational complexity of $(F, H)$-Arrowing has been categorized for a number of pairs $(F, H)$, with a significant amount of work done in the 80s and 90s.
Most relevant to our work is a result by Rutenburg, who showed that $(P_4, P_4)$-Arrowing is coNP-complete~\cite{rut:c:graph-coloring}, where $P_n$ is the path graph on $n$ vertices.
Burr showed that $(F, H)$-Arrowing is in P when $F$ and $H$ are star graphs or when $F$ is a matching~\cite{Bu3}.
Using ``senders''---graphs with restricted $(F, H)$-good colorings introduced a few years earlier by Burr et al.~\cite{burr1976graphs,burr1985use}, Burr showed that $(F, H)$-Arrowing is coNP-complete when $F$ and $H$ are members of $\Gamma_3$, the family of all 3-connected graphs and $K_3$. 
The generalized $(F,H)$-Arrowing problem, where $F$ and $H$ are also part of the input, was shown to be $\Pi^p_2$-complete by Schaefer~\cite{Scha}.\footnote{$\Pi_2^p = \conp^{\np}$,
the class of all problems whose complements are solvable
by a nondeterministic polynomial-time Turing machine having
access to an NP oracle~\cite{mey-sto-ph}.}
Aside from categorizing complexity, the primary research avenue concerned with the arrowing operator is focused on finding minimal---with different possible definitions of minimal---graphs for which arrowing holds. The smallest orders of such graphs are referred to as Ramsey numbers.
Folkman numbers are defined similarly for graphs with some extra structural constraints.
We refer the interested reader to surveys by Radziszowski~\cite{ds1}
and Bikov~\cite{bikov2018} for more information on Ramsey numbers and Folkman numbers, respectively.

Our work provides the first complexity result for $(F, H)$-Arrowing for an infinite family of graphs since Burr's $\Gamma_3$ result from 1990.
It is important to note that Burr's construction relies on that fact that contracting less than three vertices between pairs of 3-connected graphs does not create new copies of said graph. Let $F$ be 3-connected and $u,v \in V(F)$.
Construct $G$ by taking two copies of $F$ and contracting $u$ across both copies, then contracting $v$ across both copies.
Observe 
that no new copies of $F$ are constructed in this process; if a new $F$ is created then it must be disconnected by the removal of the two contracted vertices, contradicting $F$'s 3-connectivity.
This process does not work for paths since contracting two path graphs will always make several new paths across the vertices of both paths.
Thus, we require a more careful approach when constructing the gadgets necessary for our reductions.
We focus on the problem defined below and
prove a dichotomy theorem
categorizing the problem to be in \p~or be \conp-complete. We note that such theorems for other graph problems exist in the 
literature, e.g.,~\cite{ach:j:colorability,for-hop-wyl:j:subgraph-homeomorphism,hel-nes:j:H-coloring,le_et_al:LIPIcs.MFCS.2022.68}.
\begin{problem}[$(P_k, P_\ell)$-Arrowing]
Let $k$ and $\ell$ be fixed integers such that
$2 \leq k \leq \ell$.
Given a graph $G$, does $G \ra (P_k, P_\ell)$?
\end{problem}

\begin{theorem}
\label{thm:main}
    $(P_k, P_\ell)$-Arrowing is \conp-complete for all $k$ and $\ell$ unless $k = 2$, $(k, \ell) = (3,3)$, or $(k, \ell) = (3, 4)$. For these exceptions, the problem is in \p.
\end{theorem}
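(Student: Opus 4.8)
The plan is to prove Theorem~\ref{thm:main} in two halves: the easy (polynomial) cases and the hard (\conp-complete) cases. First I would dispose of the three exceptional pairs. The case $k=2$ is trivial, since $P_2$ is a single edge and any graph with an edge contains a red $P_2$ under every coloring; so $G \ra (P_2, P_\ell)$ holds iff $G$ has at least one edge, decidable in polynomial time. For $(k,\ell)=(3,3)$, avoiding a monochromatic $P_3$ means that every vertex has at most one red and at most one blue edge incident to it, i.e. the red and blue edge sets each form a matching; so $G$ admits a $(P_3,P_3)$-good coloring iff $E(G)$ can be partitioned into two matchings, which by K\H{o}nig-type reasoning is equivalent to $G$ having maximum degree at most $2$ and no odd cycle of the wrong parity---in any case a polynomial-time check. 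The remaining exceptional case $(3,4)$ is the one the abstract flags as a reduction to \textsc{2SAT}: I would introduce a Boolean variable per edge encoding its color, and observe that the forbidden patterns (a red $P_3$, i.e. two red edges sharing a vertex, and a blue $P_4$) impose constraints that can be written with clauses of width at most two, after preprocessing forced edge-colors; solving the resulting \textsc{2SAT} instance in polynomial time decides the problem. The delicate point here is verifying that the blue-$P_4$ avoidance condition genuinely collapses to $2$-clauses rather than general clauses; I expect this requires first fixing the red edges into a structure (the red graph being a disjoint union of paths and cycles, since red $P_3$ is forbidden) and then arguing that the blue constraints become local.

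For the hardness direction, membership in \conp\ is already established in the excerpt, so I only need \conp-hardness, which I would obtain by a polynomial-time many-one reduction from a known \conp-complete (or \np-complete, reducing to the complement) problem. Following the paradigm the introduction lays out, the engine of the reduction is the existence of \emph{transmitter} gadgets: graphs with tightly constrained $(P_k,P_\ell)$-good colorings that let me propagate a forced color from one ``terminal'' to another, analogous to Burr's senders for $\Gamma_3$ but adapted to paths. The plan is to first prove the existence of such transmitters (for each relevant pair $k,\ell$), then wire them together into clause/variable gadgets that simulate a satisfiability-type instance, so that $G \ra (P_k,P_\ell)$ fails precisely when the encoded formula is satisfiable.

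I would organize the hardness proof by the three regimes the abstract identifies, since the gadget constructions differ: (1) $k=3$ and $\ell \geq 5$, (2) $\ell > k \geq 4$, and (3) $\ell = k \geq 4$. In each regime I would (a) construct the relevant transmitter and prove its color-forcing property by a careful case analysis of how paths of length $k$ or $\ell$ can thread through the gadget, (b) assemble variable and clause gadgets from transmitters, and (c) prove the reduction is correct by exhibiting, for each satisfying assignment, a $(P_k,P_\ell)$-good coloring of $G$, and conversely extracting a satisfying assignment from any good coloring. Regime (3), $\ell=k$, I expect to be handled either by a direct construction or by leveraging the already-known $(P_4,P_4)$ result of Rutenburg together with a length-lifting argument.

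The main obstacle, as the introduction itself stresses, is that paths are \emph{not} rigid: unlike $3$-connected graphs, gluing or contracting path-fragments always creates many new copies of $P_k$ and $P_\ell$, so I cannot argue ``no new forbidden copies appear'' as Burr does. The crux of the work is therefore the design and correctness of the transmitters---controlling the proliferation of paths through the gadget so that the color-forcing is actually guaranteed---and I anticipate that the bulk of the technical effort, and the greatest risk of error, lies in the exhaustive but necessary path-tracing arguments that establish each transmitter's restricted-coloring property.
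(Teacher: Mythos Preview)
Your overall architecture matches the paper's: polynomial cases handled individually, and hardness via transmitter gadgets in the three regimes $k=3,\ \ell\geq 5$; $\ell>k\geq 4$; $\ell=k\geq 4$, assembled into variable/clause gadgets for a SAT-type reduction. Two points need correction.

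First, your $k=2$ argument is wrong. It is not true that ``any graph with an edge contains a red $P_2$ under every coloring'': the all-blue coloring has no red $P_2$. The correct statement is that the only coloring avoiding a red $P_2$ is the all-blue one, so $G\ra(P_2,P_\ell)$ holds iff $G$ contains $P_\ell$ as a subgraph, not iff $G$ has an edge. The conclusion (polynomial time) survives, but both your reasoning and your characterization are incorrect.

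Second, and more seriously, your $(3,4)$ sketch has a genuine gap. You correctly note that blue-$P_4$ avoidance is naturally a $3$-clause $(r_e\lor r_f\lor r_g)$ and flag this as ``the delicate point,'' but your proposed fix---``fixing the red edges into a structure'' and hoping the blue constraints become local---is not a workable plan. (Also, the red subgraph must be a matching, not ``paths and cycles''; a red path of length $2$ is already a red $P_3$.) Knowing the red edges form a matching does not by itself collapse the blue constraints to width two. The paper's mechanism is quite different: it first \emph{simplifies the graph} by iteratively deleting (i) degree-$1$ vertices whose unique neighbor has degree at most $2$, and (ii) an internal vertex of a $P_4$ all four of whose vertices have degree exactly $2$, proving each deletion preserves $(P_3,P_4)$-goodness. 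Only after this preprocessing does it add, for each $P_4$, a carefully chosen $2$-clause depending on structural side conditions (presence of a chord $(v_2,v_4)$, or $d(v_2)>2$); the correctness argument then shows that any blue $P_4$ in a satisfying coloring would force a configuration already eliminated by the preprocessing. None of this is visible in your outline, and without it the reduction to \textsc{2SAT} does not go through.

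On the hardness side your plan is essentially the paper's. One remark: the paper's transmitters are built on top of Hook's classification of the $(P_k,P_\ell)$-good colorings of the critical complete graph $K_{R(P_k,P_\ell)-1}$, so you should anticipate needing an external structural input of that kind rather than expecting the path-tracing arguments to be entirely self-contained.
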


Before this, the only known coNP-complete case for paths was when $k = \ell = 4$~\cite{rut:c:graph-coloring}. 
Despite being intuitively likely, generalizing the hardness result to larger paths proved to be an arduous task. 
Our proof relies on proving the existence of graphs with special colorings---we rely heavily on work by Hook~\cite{Ho}, who categorized the $(P_k, P_\ell)$-good colorings of the largest complete graphs which do not arrow $(P_k, P_\ell)$. 
After showing the existence of these graphs, the reduction is straightforward. 
The polynomial-time cases are straightforward (Theorem~\ref{t:p-easy}) apart from the case where $(k, \ell) = (3,4)$, wherein we reduce the problem to 2SAT (Theorem~\ref{thm:p3p4}).

The rest of this paper is organized as follows. We present the necessary preliminaries in Section~\ref{sec:prelim}. 
The proof for Theorem~\ref{thm:main} is split into Sections~\ref{sec:poly} (the polynomial-time cases) and~\ref{sec:conp} (the \conp-complete cases). 
We conclude in Section~\ref{sec:conclude}.

\section{Preliminaries}
\label{sec:prelim}

All graphs discussed in this work are simple and undirected. $V(G)$ and $E(G)$ denote the vertex and edge set of a graph $G$, respectively. We denote an edge in $E(G)$ between $u,v \in V(G)$ as $(u,v)$.
For two disjoint subsets $A, B \subset V(G)$, $E(A,B)$ refers to the edges with one vertex in $A$ and one vertex in $B$.
The neighborhood of a vertex $v \in V(G)$ is denoted as $N(v)$ and $d(v) := |N(v)|$.
The path, cycle, and complete graphs on $n$ vertices are denoted as $P_n$, $C_n$, and $K_n$, respectively. The complete graph on $n$ vertices missing an edge is denoted as $K_n - e$.
Vertex contraction is the process of replacing two vertices $u$ and $v$ with a new vertex $w$ such that $w$ is adjacent to all remaining neighbors $N(u) \cup N(v)$.

An $(F,H)$-good coloring of a graph $G$ is a red/blue coloring of $E(G)$ where the red subgraph is $F$-free, and the blue subgraph is $H$-free. We say that $G$ is $(F, H)$-good if it has at least one $(F,H)$-good coloring.
When the context is clear, we will omit $(F, H)$ and refer to the coloring as a good coloring. 

Formally, a coloring for $G$ is defined as function $c : E(G) \rightarrow \{\text{red}, \text{blue}\}$ that maps edges to the colors red and blue. For an edge $(u,v)$ and coloring $c$, we denote its color as $c(u,v)$.

\section{Polynomial-Time Cases}
\label{sec:poly}

In this section, we prove the P cases from Theorem~\ref{thm:main}.
Particularly, we describe polynomial-time algorithms for $(P_2, P_\ell)$-Arrowing and $(P_3, P_3)$-Arrowing (Theorem~\ref{t:p-easy}) and provide a polynomial-time reduction from $(P_3, P_4)$-Arrowing to 2SAT (Theorem~\ref{thm:p3p4}). 

\begin{theorem}
\label{t:p-easy}
$(P_k, P_\ell)$-Arrowing is in $\p$ when $k = 2$ and when $k = \ell = 3$.
\end{theorem}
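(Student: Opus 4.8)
The plan is to handle both cases by characterizing exactly when the input graph admits a $(P_k,P_\ell)$-good coloring, and then observing that this characterization is checkable in polynomial time. This suffices because $G \ra (P_k,P_\ell)$ fails precisely when $G$ has a $(P_k,P_\ell)$-good coloring, and $\p$ is closed under complement.

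For $k = 2$, note that a red $P_2$ is simply a single red edge, so \emph{any} good coloring must color every edge of $G$ blue. Hence $G$ has a good coloring if and only if the all-blue coloring is good, i.e., if and only if $G$ is $P_\ell$-free. Since $\ell$ is a fixed constant, testing whether $G$ contains $P_\ell$ as a subgraph can be done by brute force over all ordered $\ell$-tuples of distinct vertices in time $O(n^\ell)$ (color coding would give a faster bound, but this is unnecessary). We therefore conclude that $G \ra (P_2, P_\ell)$ if and only if $G$ contains $P_\ell$, placing the problem in $\p$.

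For $k = \ell = 3$, the key observation is that a graph is $P_3$-free exactly when it has maximum degree at most $1$, i.e., when it is a matching. Consequently, a red/blue coloring of $E(G)$ is $(P_3,P_3)$-good if and only if each color class is a matching, which is precisely the condition that the coloring is a proper $2$-edge-coloring of $G$. It is a standard fact that $G$ admits a proper $2$-edge-coloring if and only if $\Delta(G) \le 2$ and $G$ contains no odd cycle (equivalently, the line graph of $G$ is bipartite); the components of such a $G$ are paths and even cycles, on which alternating colorings clearly exist, while an odd cycle or a degree-$3$ vertex obstructs one. Both conditions are testable in linear time, so we conclude that $G \ra (P_3,P_3)$ if and only if $\Delta(G) \ge 3$ or $G$ contains an odd cycle, again placing the problem in $\p$.

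I expect no serious obstacle here: both reductions are elementary once the characterizations of $P_2$-free and $P_3$-free subgraphs are made explicit. The only point requiring a word of justification is the odd-cycle (parity) condition in the $k=\ell=3$ case, which follows from the standard correspondence between proper $2$-edge-colorings and $2$-colorings of the line graph.
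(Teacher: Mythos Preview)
Your proof is correct and matches the paper's approach: Case~1 is identical, and in Case~2 both you and the paper recognize that a $(P_3,P_3)$-good coloring is precisely a proper $2$-edge-coloring. The only cosmetic difference is that the paper describes the greedy propagation algorithm directly (analogous to BFS for bipartiteness), whereas you state the equivalent structural characterization $\Delta(G)\le 2$ with no odd cycle; these are two presentations of the same idea.
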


\begin{proof}
Let $G$ be the input graph. Without loss of generality, assume that $G$ is connected (for disconnected graphs, we run the algorithm on each connected component).

\noindent \textit{Case 1 $(k = 2)$.}  
Coloring any edge in $G$ red will form a red $P_2$. Thereby, the entire graph must be colored blue. Thus, a blue $P_\ell$ is avoided if and only if $G$ is $P_\ell$-free, which can be checked by brute force, since $\ell$ is constant.

\noindent \textit{Case 2 $(k = \ell = 3)$.}  
Note that in any $(P_3, P_3)$-good coloring of $G$, edges of the same color 
    cannot be adjacent; otherwise, a red or blue $P_3$ is formed. 
    Thus, we can check if $G$ is $(P_3, P_3)$-good similarly to how we check if a graph is 2-colorable: arbitrarily color an edge red and color all of its adjacent edges blue. For each blue edge, color its neighbors red and for each red edge, color its neighbors blue. 
    Repeat this process until all edges are colored or a red or blue $P_3$ is formed.
    This algorithm is clearly polynomial-time. \qed
\end{proof}

The proof that $(P_3, P_4)$-Arrowing is in P consists of two parts. A preprocessing step to simplify the graph (using Lemmas~\ref{lem:p3p4-1} and~\ref{lem:p3p4-2}), followed by a reduction to 2SAT, which was proven to be in P by Krom in 1967~\cite{krom1967decision}. 

\begin{problem}[2SAT]
    Let $\phi$ be a CNF formula where each clause has at most two literals. Does there exist a satisfying assignment of $\phi$?
\end{problem}

\begin{lemma}
\label{lem:p3p4-1}
Suppose $G$ is a graph and $v \in V(G)$ is a vertex such that
$d(v) = 1$ and $v$'s only neighbor has degree at most two.
Then,
$G$ is $(P_3, P_4)$-good if and only if $G - v$ is $(P_3, P_4)$-good.
\end{lemma}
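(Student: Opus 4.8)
The plan is to handle the two implications separately; the forward direction is immediate and the reverse direction carries all the content. Throughout I would use the characterization that a coloring is $(P_3, P_4)$-good precisely when the red edges form a matching (a red $P_3$ is two red edges sharing a vertex) and the blue edges contain no path on four vertices. Write $u$ for $v$'s unique neighbor; by hypothesis $d(u) \le 2$.

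For the forward direction ($\Rightarrow$), given a good coloring $c$ of $G$, I would simply restrict $c$ to $E(G - v)$, discarding the color of the edge $(u,v)$. Deleting edges cannot create a red $P_3$ or a blue $P_4$, so the restriction is a good coloring of $G - v$.

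The reverse direction ($\Leftarrow$) is where the work lies. Starting from a good coloring $c'$ of $G - v$, I would extend it to $G$ by choosing a color for the single new edge $(u,v)$, noting that any new violation must involve this edge. I would split on $d(u)$: if $d(u) = 1$ then $u$ is isolated in $G - v$ and I can color $(u,v)$ arbitrarily, since it forms an isolated edge; if $d(u) = 2$, letting $w$ be $u$'s other neighbor, I would color $(u,v)$ with the color opposite to $c'(u,w)$. I would then check the two invariants. The only possible new red edge is $(u,v)$, and since $v$ is a leaf and the opposite-color rule keeps $(u,w)$ blue whenever $(u,v)$ is red, red stays a matching. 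The only possible new blue edge is $(u,v)$, which is blue only when $(u,w)$ is red, so any blue path through it must stop at the leaf $v$ on one side and cannot leave $u$ on the other, ruling out a blue $P_4$.

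The main obstacle --- and the reason both degree hypotheses appear --- is the blue $P_4$ constraint: whereas the matching constraint is purely local, a blue path could a priori extend far into $G$. The leaf condition on $v$ and the degree bound on $u$ are exactly what force any blue path using $(u,v)$ to terminate at both endpoints, so the single opposite-color choice discharges the matching requirement at $u$ and the $P_4$ requirement at once, with no interference from the rest of the graph.
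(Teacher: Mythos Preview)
Your proof is correct and follows essentially the same approach as the paper: both argue the forward direction by restriction, split on $d(u)\in\{1,2\}$ for the reverse direction, and in the nontrivial case $d(u)=2$ extend the good coloring of $G-v$ by giving $(u,v)$ the color opposite to that of $(u,w)$. Your explicit verification via the ``red is a matching'' reformulation is a bit more detailed than the paper's, but the argument is the same.
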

\begin{proof}
Let $u$ be the neighbor of $v$.
If $d(u) = 1$, the connected component of $v$ is a $K_2$ and the statement is trivially true.
If $d(u) = 2$, let $w$ be the other neighbor of $u$, i.e., the neighbor that is not $v$.
Clearly, if 
$G$ is $(P_3, P_4)$-good, then $G - v$ is $(P_3, P_4)$-good. 
We now prove the other direction. Suppose we have good coloring of $G - v$. It is immediate that we can extend this to a good coloring of $G$ by coloring $(v,u)$ (the only edge adjacent to $v$) red if $(u,w)$ is colored blue, and blue if $(u,w)$ is colored red.
 \qed
\end{proof}

\begin{lemma}
\label{lem:p3p4-2}
Suppose $G$ is a graph and there is a $P_4$ in $G$ with edges $(v_1, v_2), (v_2, v_3)$, and $(v_3, v_4)$ such that $d(v_1) = d(v_2) = d(v_3) = d(v_4) = 2$. 
Then,
$G$ is $(P_3, P_4)$-good if and only if $G - v_2$ is $(P_3, P_4)$-good.
\end{lemma}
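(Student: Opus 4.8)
The plan is to treat the two directions separately, with essentially all the work in the ``if'' direction. First I would record the forced local structure: since $d(v_1)=d(v_2)=d(v_3)=d(v_4)=2$, the vertex $v_2$ is adjacent exactly to $v_1$ and $v_3$, the vertex $v_3$ exactly to $v_2$ and $v_4$; let $a$ be the second neighbor of $v_1$ and $b$ the second neighbor of $v_4$, so that the relevant edges form the segment $a - v_1 - v_2 - v_3 - v_4 - b$. Write $e_a=(a,v_1)$, $e_1=(v_1,v_2)$, $e_2=(v_2,v_3)$, $e_3=(v_3,v_4)$, $e_b=(v_4,b)$. The ``only if'' direction is immediate: $G-v_2$ is an induced subgraph of $G$, and the restriction of any $(P_3,P_4)$-good coloring of $G$ to $G-v_2$ is again $(P_3,P_4)$-good, since every red $P_3$ or blue $P_4$ in the smaller graph already occurs in $G$. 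So the task reduces to extending a good coloring of $G-v_2$ to $G$ by choosing colors for the two new edges $e_1$ and $e_2$.

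Next I would make the extension constraints explicit, using that a $(P_3,P_4)$-good coloring is exactly one whose red edges form a matching and whose blue edges contain no $P_4$. Because $v_1,v_2,v_3$ all have degree two, any three-edge path using $e_1$ or $e_2$ must lie along $a - v_1 - v_2 - v_3 - v_4 - b$; hence the only blue $P_4$'s that a choice for $e_1,e_2$ could newly create are $\{(x,a),e_a,e_1\}$ for a blue edge $(x,a)$, $\{e_a,e_1,e_2\}$, $\{e_1,e_2,e_3\}$, and $\{e_2,e_3,e_b\}$, while the red-matching condition only adds the local constraints that no two of $e_a,e_1,e_2,e_3$ that meet at a common vertex are simultaneously red. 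The clean idea is to color $e_1$ the \emph{opposite} color of $e_a$: this single choice forbids a red $P_3$ at $v_1$ \emph{and} kills every blue $P_4$ entering from the $a$-side (the first two types above), since $e_a$ and $e_1$ are then never both blue.

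The main obstacle is the tension at $v_2$. When $e_a$ and $e_3$ are both blue, the symmetric instinct is to color $e_1$ and $e_2$ both red (to block both sides), but that produces the red $P_3$ $v_1 v_2 v_3$. I would break this symmetry with Lemma~\ref{lem:p3p4-1}: in $G-v_2$ the vertex $v_3$ has degree one and its only neighbor $v_4$ has degree two, so $G-v_2$ is good iff $(G-v_2)-v_3$ is good, and the extension in that lemma's proof lets me assume the good coloring of $G-v_2$ already has $e_3$ colored opposite to $e_b$. With $e_b$ thus forced to red whenever $e_3$ is blue, coloring $e_2$ blue can never complete the blue $P_4$ $\{e_2,e_3,e_b\}$. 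Concretely I would then set: $e_1$ opposite to $e_a$; if $e_a$ is blue, set $e_2$ blue; and if $e_a$ is red, set $e_2$ opposite to $e_3$. A short verification against the short list of constraints above shows this is $(P_3,P_4)$-good in every case, the delicate case $e_a=e_3=\text{blue}$ being exactly the one rescued by the $e_3$/$e_b$ opposition.

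Finally I would dispose of the degenerate situations in which $\{a,b\}$ meets $\{v_1,v_2,v_3,v_4\}$ (so the path lies on a short cycle, e.g.\ a $C_4$ component where $e_a$ and $e_b$ coincide), where the edge list above collapses; such components are small and directly checked to be good both with and without $v_2$, so they do not affect the statement.
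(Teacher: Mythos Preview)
Your proof is correct and ends up with exactly the same extension as the paper: after disposing of the $C_4$ degenerate case, the paper directly recolors all three of $(v_1,v_2),(v_2,v_3),(v_3,v_4)$ based on the colors of $(v_0,v_1)$ and $(v_4,v_5)$ (your $e_a,e_b$), giving the same four patterns (blue--red--blue, red--blue--red, blue--blue--red, red--blue--blue) that your rule produces. The only presentational difference is that you justify recoloring $e_3$ by routing through Lemma~\ref{lem:p3p4-1}, whereas the paper simply overwrites the color of $(v_3,v_4)$ without comment.
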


\begin{proof}
If $(v_1,v_4)$ is an edge, then the connected component of $v_2$ is a $C_4$ and the statement is trivially true.
If not, let $v_0, v_5 \not \in \{v_1, v_2, v_3, v_4\}$ be such that $(v_0,v_1)$ and 
$(v_4,v_5)$ are edges. Note that it is possible that $v_0 = v_5$. 
Clearly, if
$G$ is $(P_3, P_4)$-good then $G - v_2$ is $(P_3, P_4)$-good. For the other direction, suppose $c$ is a $(P_3, P_4)$-good coloring of $G - v_2$. We now construct a coloring $c'$ of $G$. We  color all edges other than $(v_1, v_2), (v_2, v_3)$, and $(v_3, v_4)$
the same as $c$. 
The colors of the remaining three edges are determined by the coloring of $(v_0,v_1)$ and $(v_4,v_5)$ as follows.

\begin{itemize}
\item If $c(v_0,v_1) = c(v_4,v_5) =$ red, then
$c'(v_1,v_2), c'(v_2,v_3), c'(v_3,v_4) = $ blue, red, blue.
\item If $c(v_0,v_1) = c(v_4,v_5) =$ blue, then
$c'(v_1,v_2), c'(v_2,v_3), c'(v_3,v_4) = $ red, blue, red.
\item If $c(v_0,v_1) = $ red and $c(v_4,v_5) =$ blue, then
$c'(v_1,v_2), c'(v_2,v_3), c'(v_3,v_4) = $ blue, blue, red.
\item If $c(v_0,v_1) = $ blue and $c(v_4,v_5) =$ red, then
$c'(v_1,v_2), c'(v_2,v_3), c'(v_3,v_4) = $ red, blue, blue. 
\end{itemize}

Since the cases above are mutually exhaustive, this completes the proof. \qed
\end{proof}

\begin{theorem}
\label{thm:p3p4}
    $(P_3, P_4)$-Arrowing is in $\p$.
\end{theorem}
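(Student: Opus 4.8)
The plan is to reduce $(P_3, P_4)$-Arrowing to 2SAT in polynomial time, invoking Krom's theorem that 2SAT is in $\p$. The overall strategy is: first use the preprocessing lemmas to simplify $G$ so that every vertex has degree at least $3$ (or lies in a trivially-handled component), then encode the remaining good-coloring constraints as a 2CNF formula. I would introduce a Boolean variable $x_e$ for each edge $e$, with the convention that $x_e$ true means $e$ is colored red and $x_e$ false means blue. The task is then to express ``no red $P_3$'' and ``no blue $P_4$'' using only clauses of width at most two.

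The red constraint is easy and genuinely 2SAT-shaped: a red $P_3$ is just two adjacent red edges, so for every pair of adjacent edges $e, f$ I add the clause $(\neg x_e \vee \neg x_f)$, forbidding both from being red. The real work is the blue $P_4$ constraint, which a priori couples three consecutive edges and so looks like a width-$3$ condition. The key observation that makes this expressible in 2SAT is that, after preprocessing, the red constraint already forces the blue edges to be ``dense'': since no two adjacent edges can both be red, around any vertex at most a controlled number of edges are red, so long blue paths are essentially forced. Concretely, I would argue that once every internal vertex has degree at least $3$, any three consecutive edges forming a path contain enough structure that the blue $P_4$ condition becomes equivalent to a conjunction of binary implications. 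The cleanest route is to show that after preprocessing, forbidding blue $P_4$ reduces to forbidding certain \emph{two-edge} patterns together with the already-present red constraints: I would establish that in the simplified graph, a blue $P_4$ can always be ``blamed'' on a local configuration of two edges that cannot simultaneously avoid being part of some blue $P_4$ given the forced colorings of their neighbors.

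I would carry out the steps in this order. First, apply Lemmas~\ref{lem:p3p4-1} and~\ref{lem:p3p4-2} exhaustively to $G$, noting each application strictly decreases the number of vertices and preserves $(P_3,P_4)$-goodness, so the process terminates in polynomial time and yields an equivalent instance $G'$ in which no degree-$1$ vertex hangs off a low-degree neighbor and no all-degree-$2$ $P_4$ survives. Second, handle the trivial small components ($K_2$, $C_4$, and any remaining bounded-size pieces) directly by brute force. Third, on the reduced graph, set up the variables $x_e$ and emit the red-adjacency clauses. Fourth --- the crux --- emit clauses capturing the blue $P_4$ condition; here I expect to need a short case analysis showing that in $G'$ the forbidden blue $P_4$'s are detected by binary clauses, leveraging that the degree and structural conditions guaranteed by the lemmas rule out the genuinely three-way interactions. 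Finally, argue correctness: a satisfying assignment of the 2CNF formula corresponds exactly to a $(P_3,P_4)$-good coloring of $G'$, and conversely.

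The main obstacle, as flagged, is the fourth step. Showing that a width-$3$ condition (three consecutive blue edges) collapses to width-$2$ clauses is exactly where the preprocessing lemmas must earn their keep. The danger is a configuration where three blue edges in a row are only jointly forbidden but no pair among them is individually constrained; I expect the proof to rule this out by showing that the red constraint plus the minimum-degree guarantee forces at least one edge in any such triple to be determined, thereby splitting the triple constraint into implications over pairs. Getting this case analysis complete and airtight --- verifying that \emph{every} potential blue $P_4$ in the reduced graph is caught --- is where the real care is needed.
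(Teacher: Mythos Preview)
Your overall approach---preprocess with Lemmas~\ref{lem:p3p4-1} and~\ref{lem:p3p4-2}, then reduce to 2SAT with one Boolean variable per edge---is exactly the paper's, and your treatment of the red-$P_3$ clauses matches. However, your expectation that preprocessing leaves ``every vertex with degree at least $3$'' is false: Lemma~\ref{lem:p3p4-1} only removes a degree-$1$ vertex whose neighbor has degree at most $2$, so pendants attached to high-degree vertices survive, as do degree-$2$ vertices not lying on an all-degree-$2$ $P_4$. The paper's argument does \emph{not} rely on any global minimum-degree bound, so if you try to push step four through using that assumption you will get stuck.

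The idea you are missing in step four is the explicit choice of clauses. For a $P_4$ on $v_1v_2v_3v_4$ in $G'$, the paper adds $(r_{(v_1,v_2)} \lor r_{(v_3,v_4)})$ when $(v_2,v_4) \in E(G')$, and $(r_{(v_2,v_3)} \lor r_{(v_3,v_4)})$ when $(v_2,v_4) \notin E(G')$ but $d(v_2) > 2$; note these are strictly weaker than ``at least one of the three edges is red.'' The correctness argument then runs by contradiction: given a blue $P_4$ violating no emitted clause, one shows (assuming $d(v_2)\geq d(v_3)$) that $d(v_2)>2$ is impossible, then that $d(v_1)=1$ and $d(v_1)>2$ are both impossible (the latter by looking at the shifted $P_4$ through $v_0v_1v_2v_3$), forcing $d(v_1)=d(v_2)=d(v_3)=d(v_4)=2$---whereupon Lemma~\ref{lem:p3p4-2} should already have deleted $v_2$. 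So the collapse from width-$3$ to width-$2$ comes from a case split on the chord $(v_2,v_4)$ and the degree of the internal vertex, with the two preprocessing lemmas absorbing precisely the residual all-degree-$2$ case.
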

\begin{proof}
Let $G$ be the input graph. 
Let $G'$ be the graph obtained by repeatedly removing vertices $v$ described in Lemma~\ref{lem:p3p4-1} and
vertices $v_2$ described in Lemma~\ref{lem:p3p4-2} until no more such vertices exist. As implied by said lemmas, $G' \ra (P_3, P_4)$ if and only if $G \ra (P_3, P_4)$. Thus, it suffices to construct a 2SAT formula $\phi$ such that $\phi$ is satisfiable if and only if $G'$ is $(P_3, P_4)$-good.

Let $r_e$ be a variable corresponding to the edge $e \in E(G')$, denoting that $e$ is colored red.
We construct a formula $\phi$, where a solution to $\phi$ corresponds to a coloring of $G'$.
For each $P_3$ in $G'$, with edges $(v_1, v_2)$ and $(v_2, v_3)$, add the clause 
$\left( \overline{r_{(v_1, v_2)}} \lor \overline{r_{(v_2, v_3)}} \right)$. Note that this expresses ``no red $P_3$'s.'' 
For each $P_4$ in $G'$, with edges $(v_1, v_2), (v_2, v_3)$, and $(v_3, v_4)$:

\begin{enumerate}
    \item If $( v_2, v_4 ) \in E(G')$, add the clause
    $\left( r_{( v_1, v_2 )} \lor r_{( v_3, v_4 )} \right) $.
    
    \item If $( v_2, v_4 ) \not \in E(G')$ and $d(v_2) > 2$, then add the clause
    $\left ( r_{( v_2, v_3 )} \lor r_{(v_3, v_4 )} \right ) $.
\end{enumerate}

It is easy to see that the conditions specified above must be satisfied by each good coloring of $G'$, and thus 
$G'$ being $(P_3, P_4)$-good implies that $\phi$ is satisfiable.
We now prove the other direction by contradiction.
Suppose $\phi$ is satisfied, but the corresponding coloring $c$ is not $(P_3, P_4)$-good. It is immediate that red $P_3$'s cannot occur in $c$, so we assume that there exists a blue $P_4$, with edges
$e = (v_1, v_2), f = (v_2, v_3)$, and $g = (v_3, v_4)$ such that $r_e = r_f = r_g = $ false in the satisfying assignment of $\phi$. Without loss of generality, assume that $d(v_2) \geq d(v_3)$. 

\begin{itemize}
\item If $d(v_2) > 2$, $\phi$ would contain clause $r_e \vee r_g$ or $r_f \vee r_g$. It follows that $d(v_2) = d(v_3) = 2$.
\item If $d(v_1) = 1$, $v_1$ would have been deleted by applying Lemma~\ref{lem:p3p4-1}. It follows that $d(v_1) > 1$. Similarly, $d(v_4) > 1$.
\item If $d(v_1) > 2$, then there exists a vertex $v_0$ such that $(v_0, v_1), (v_1, v_2), (v_2,v_3)$ are a $P_4$ in $G'$, $d(v_1) > 2$ and $(v_1, v_3) \not \in E(G')$ (since $d(v_3) = 2)$. This implies that $\phi$ contains clause $r_e \vee r_f$, which is a contradiction. It follows that $d(v_1) = 2$. Similarly, $d(v_4) = 2$.
\item So, we are in the situation that $d(v_1) = d(v_2) = d(v_3) = d(v_4) = 2$. But then $v_2$ would have been deleted by Lemma~\ref{lem:p3p4-2}.
\end{itemize}
Since the cases above are mutually exhaustive, this completes the proof. \qed
\end{proof}

\section{coNP-Complete Cases}
\label{sec:conp}

In this section, we discuss the coNP-complete cases in Theorem~\ref{thm:main}. 
In Section~\ref{sec:reduc}, we describe how NP-complete SAT variants can be reduced to $(P_k, P_\ell)$-Nonarrowing (the complement of $(P_k, P_\ell)$-Arrowing: does there exist a $(P_k, P_\ell)$-good coloring of $G$?).
The \np-complete SAT variants are defined below.

\begin{problem}[$(2,2)$-3SAT~\cite{berman200322sat}]
Let $\phi$ be a CNF formula where each clause contains
exactly three distinct variables,
and each variable appears only
four times: twice unnegated and twice negated. Does there exist a satisfying assignment for $\phi$?
\end{problem}

\begin{problem}[Positive NAE E3SAT-4~\cite{antunes2019characterizing}]
    Let $\phi$ be a CNF formula where each clause is an NAE-clause (a clause that is satisfied when its literals are not all true or all false) containing exactly three (not necessarily distinct) variables, and each variable appears 
    at most four times, only unnegated. 
    Does there exist a satisfying assignment for $\phi$?
\end{problem}

\begin{figure}[t]
    \centering
    \includegraphics[width=0.8\textwidth]{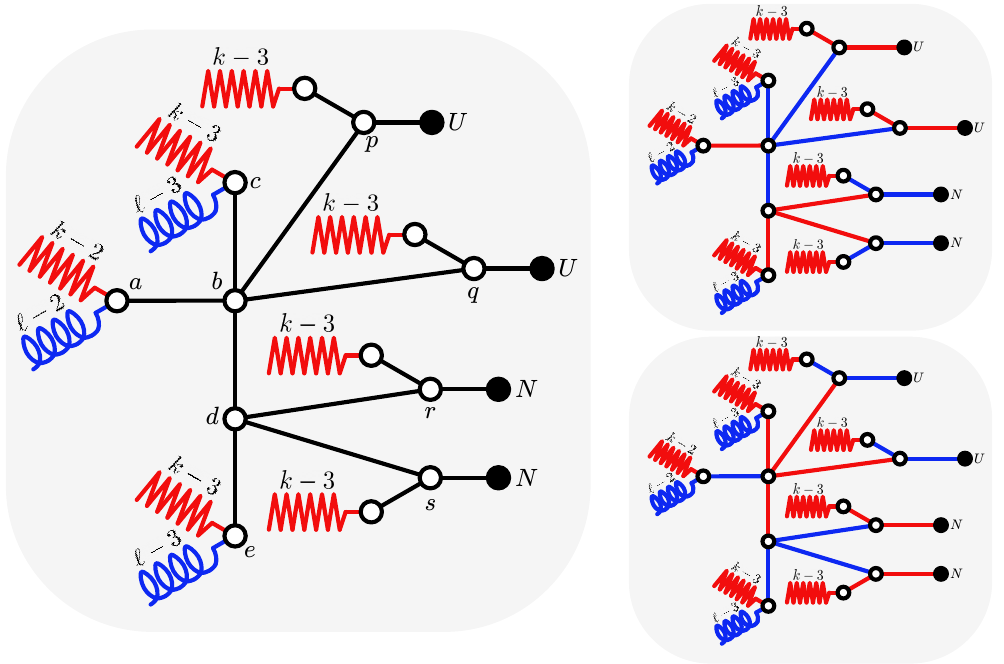}
    \caption{The variable gadget for $(P_k, P_\ell)$-Nonarrowing  when $4 \leq k < \ell$ is shown on the left. The output vertices are filled in. Red jagged lines and blue spring lines represent $(k,\ell,x)$-red- and $(k,\ell,x)$-blue-transmitters, respectively, where the value of $x$ is shown on the top, and the vertex the lines are connected to are the strict endpoints of the monochromatic paths.
    Observe that when $(a, b)$ is red, other edges adjacent to $b$ must be blue to avoid a red $P_{k}$. This, in turn, causes neighbors $p$ and $q$ to have incoming blue $P_{\ell-1}$'s, and vertices marked \textbf{U} are now strict endpoints of red $P_{k-1}$'s.
    Moreover, edges adjacent to $d$ (except $(b,d)$) must be red to avoid blue $P_{\ell}$'s. Thus, $r$ and $s$ are strict endpoints of red $P_{k-1}$'s, causing the vertices marked \textbf{N} to be strict endpoints of blue $P_3$'s.
    A similar pattern is observed when $(a,b)$ is blue.
    Note that for $k \leq 4$, the $(k,\ell,k-3)$-red-transmitter can be ignored.
    On the right, the two kinds of $(P_k, P_\ell)$-good colorings of the gadget are shown.}
    \label{fig:pkpl-var}
\end{figure}

Our proofs depend on the existence of graphs we refer to as ``transmitters,'' defined below.
These graphs 
enforce behavior on
special vertices which are 
\textit{strict endpoints} of red or blue paths.
For a graph $G$ and coloring $c$, we say that $v$ is a strict endpoint of a red (resp., blue) $P_k$ in $c$ if $k$ is the length of the longest red (resp., blue)
path that $v$ is the endpoint of.
We prove the existence of these graphs in Section~\ref{sec:transmit}.

\begin{definition}
Let $3 \leq k < \ell$.
For an integer $x \in  \{2,3,\ldots,k-1\}$
(resp., $x \in \{2,3,\ldots,\ell-1\}$) 
a $(k,\ell,x)$-red-transmitter (resp., $(k,\ell,x)$-blue-transmitter) is a $(P_k, P_\ell)$-good graph $G$ with a vertex $v \in V(G)$ such that in every $(P_k, P_\ell)$-good coloring of $G$, $v$ is the strict endpoint of a 
red (resp., blue) $P_x$, and is not adjacent to any blue (resp., red) edge.
\end{definition}

\begin{definition}
Let $k \geq 3$ and $x \in  \{2,3,\ldots,k-1\}$. 
A $(k,x)$-transmitter is a $(P_k, P_k)$-good graph $G$ with a vertex $v \in V(G)$ such that in every $(P_k, P_k)$-good coloring of $G$, $v$ is either
(1) the strict endpoint of a red $P_x$ and not adjacent to any blue edge, or
(2) the strict endpoint of a blue $P_x$ and not adjacent to any red edge.
\end{definition}

\subsection{Reductions}
\label{sec:reduc}

We present three theorems that describe 
gadgets to reduce NP-complete variants of SAT to $(P_k, P_\ell)$-Nonarrowing.

\begin{figure}[t]
    \centering
    \includegraphics[width=0.8\textwidth]{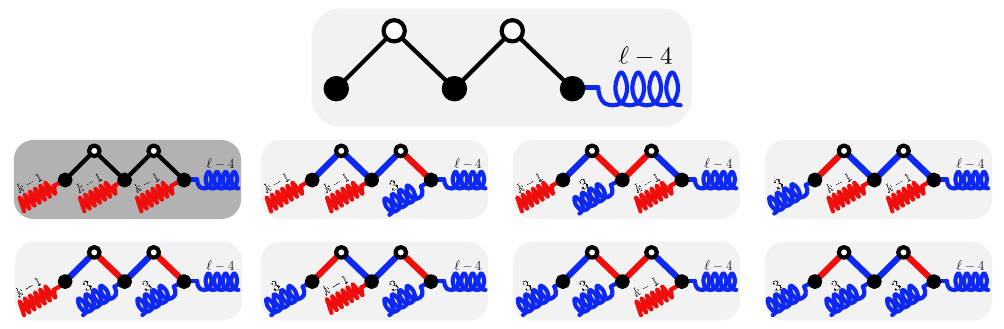}
    \caption{The clause gadget for $(P_k, P_\ell)$-Nonarrowing when $4 \leq k < \ell$ is shown on top. The input vertices are filled in.
    Below it, we show the eight possible combinations of inputs that can be given to the gadget. Observe that a $(P_k, P_\ell)$-good coloring is always possible unless the input is three red $P_{k-1}$'s (top left). 
    As in Figure~\ref{fig:pkpl-var}, jagged and spring lines represent transmitters.
    We use this representation of transmitters to depict the two forms of input to the gadget.
    For $\ell \leq 5$, the $(k,\ell,\ell-4)$-blue-transmitter can be ignored.}
    \label{fig:pkpl-clause}
\end{figure}

\begin{theorem}
\label{thm:pkpl}
    $(P_k, P_\ell)$-Arrowing is \conp-complete for all $4 \leq k < \ell $.
\end{theorem}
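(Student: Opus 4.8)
The plan is to prove coNP-hardness by giving a polynomial-time many-one reduction from $(2,2)$-3SAT to $(P_k,P_\ell)$-Nonarrowing; combined with the membership of Arrowing in coNP noted in the introduction (a good coloring is a polynomially verifiable certificate of nonarrowing), this yields coNP-completeness. I would reduce from $(2,2)$-3SAT rather than Positive NAE E3SAT-4 because each variable there occurs exactly twice positively and twice negatively, which matches the four designated output vertices of the variable gadget of Figure~\ref{fig:pkpl-var} (two marked \textbf{U} for the positive occurrences, two marked \textbf{N} for the negated ones) and keeps every gadget degree bounded in terms of the constants $k,\ell$.

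Given a formula $\phi$, I would build a graph $G_\phi$ by installing one copy of the variable gadget of Figure~\ref{fig:pkpl-var} per variable and one copy of the clause gadget of Figure~\ref{fig:pkpl-clause} per clause, then wiring the \textbf{U}/\textbf{N} output vertices of each variable into the three input vertices of the clause gadgets containing the corresponding literal. The jagged and spring lines in both figures are instantiated by the $(k,\ell,x)$-red- and $(k,\ell,x)$-blue-transmitters whose existence is guaranteed in Section~\ref{sec:transmit}. Since each transmitter has size depending only on the fixed parameters $k,\ell$, and $\phi$ has linearly many variables and clauses, $G_\phi$ has size polynomial in $|\phi|$ and is computable in polynomial time.

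The correctness argument splits into two gadget invariants. First, (i) in every $(P_k,P_\ell)$-good coloring the variable gadget realizes exactly the two colorings drawn on the right of Figure~\ref{fig:pkpl-var}, which I read as the two truth values; in each of these the \textbf{U} and \textbf{N} outputs are forced to complementary signals (a red $P_{k-1}$ strict endpoint versus a blue $P_3$ strict endpoint), exactly as traced in the caption via the transmitters hanging off $b$ and $d$. Second, (ii) the clause gadget admits a good extension for every combination of its three inputs \emph{except} the all-``red $P_{k-1}$'' case, i.e.\ it is colorable precisely when the clause is satisfied (reading the red signal as a falsified literal). Granting (i) and (ii), a satisfying assignment of $\phi$ yields a good coloring of $G_\phi$ by coloring each variable gadget according to its truth value and extending each clause gadget (possible since no clause is all-false), while conversely any good coloring of $G_\phi$ induces a consistent assignment through (i) that satisfies every clause through (ii). Hence $G_\phi$ is $(P_k,P_\ell)$-good iff $\phi$ is satisfiable, equivalently $G_\phi \not\ra (P_k,P_\ell)$ iff $\phi$ is satisfiable.

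The main obstacle---and the precise reason paths are harder than Burr's $3$-connected graphs---is that gluing an output vertex to an input vertex can splice two short monochromatic paths into one long path, potentially creating a red $P_k$ or blue $P_\ell$ that crosses a gadget boundary and destroys an otherwise good coloring. The \emph{strict endpoint} formulation and the transmitters exist exactly to contain this: by forcing each output to be the strict endpoint of a path of a precise length (e.g.\ a red $P_{k-1}$, so that a single further red edge already completes a red $P_k$) and by forbidding the opposite-colored edge at that vertex, the transmitters bound how far any externally arriving path can be extended. The technically delicate step is therefore the exhaustive case analysis establishing (i) and (ii)---checking, for each forced partial coloring and each way an external path could continue, that no new red $P_k$ or blue $P_\ell$ appears and that the claimed forcing is genuine. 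I expect this local path-length bookkeeping, rather than any single idea, to be the bulk of the work, and it rests entirely on the transmitter guarantees of Section~\ref{sec:transmit} and the behavior summarized in the two figure captions.
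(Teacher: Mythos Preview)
Your proposal is correct and mirrors the paper's proof essentially line for line: the paper also reduces $(2,2)$-3SAT to $(P_k,P_\ell)$-Nonarrowing, instantiates one variable gadget (Figure~\ref{fig:pkpl-var}) per variable and one clause gadget (Figure~\ref{fig:pkpl-clause}) per clause, contracts each \textbf{U}/\textbf{N} output with an input vertex of the appropriate clause gadget, and reads ``strict endpoint of a red $P_{k-1}$'' as false and ``strict endpoint of a blue $P_3$'' as true. The only point to make explicit is that ``wiring'' is implemented by vertex contraction, exactly as the paper does.
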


\begin{proof}
    We reduce $(2, 2)$-3SAT to $(P_k, P_\ell)$-Nonarrowing.
    Let $\phi$ be the input to $(2, 2)$-3SAT. We construct $G_\phi$ such that $G_\phi$ is $(P_k, P_\ell)$-good if and only if $\phi$ is satisfiable.
    Let $VG$ and $CG$ be the variable and clause gadgets shown in Figures~\ref{fig:pkpl-var} and~\ref{fig:pkpl-clause}.
    $VG$ has four output vertices that emulate the role of 
    sending a truth signal from a variable to a clause. 
    We first look at Figure~\ref{fig:pkpl-var}.
    The vertices labeled $U$ (resp., $N$) correspond to unnegated (resp., negated) signals.
    Being the strict endpoint of a blue $P_3$ corresponds to a true signal while being the strict endpoint of a red $P_{k-1}$ corresponds to a false signal.
    We now look at Figure~\ref{fig:pkpl-clause}.
    When three red $P_{k-1}$ signals are sent to the clause gadget, it forces the entire graph to be blue, forming a blue $P_\ell$. When at least one blue $P_3$ is present, a good coloring of $CG$ is possible. 
    
    We construct $G_\phi$ like so.
    For each variable (resp., clause) in $\phi$, we add 
    a copy of $VG$ (resp., $CG$) to $G_\phi$.
    If a variable appears unnegated (resp., negated) in a clause, a $U$-vertex (resp., $N$-vertex) from the corresponding $VG$ is contracted with a previously uncontracted input vertex of the $CG$ corresponding to said clause.
    The correspondence between satisfying assignments of $\phi$ and good colorings of $G_\phi$ is easy to see. 
    \qed
\end{proof}

\begin{figure}[t]
    \centering
    \includegraphics[width=0.8\textwidth]{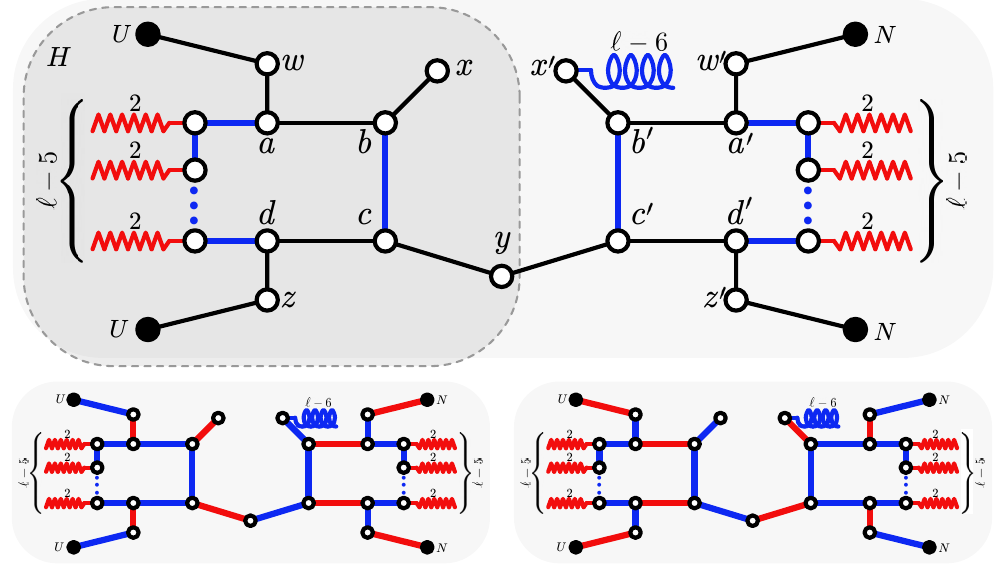}
    \caption{The variable gadget for $(P_3, P_\ell)$-Nonarrowing where $\ell \geq 6$ (top) and its two good colorings (bottom).
    The variable gadget is a combination of two $H$'s, whose properties we discuss in the proof of Theorem~\ref{thm:p3pl}. Note that when $(a,b)$ is red in $H$, then $(a',b')$ is blue in $H$'s copy, and vice versa; if both copies have the same coloring of $(a,b)$, then a red $P_3$ is formed at $y$, 
    or a blue $P_\ell$ is formed from the path from $x$ to $x'$ and the $(3,\ell,\ell-6)$-blue-transmitter that $x'$ is connected to.
    When $\ell = 5$, the edge $(a,d)$ is added in $H$, in lieu of the $\ell-5$ vertices connected to $(3,\ell,2)$-red-transmitters. 
    Note that for $\ell \leq 8$, the $(3,\ell,\ell-6)$-blue-transmitter can be ignored.
    }
    \label{fig:p3pl}
\end{figure}

\begin{theorem}
\label{thm:p3pl}
    $(P_3, P_\ell)$-Arrowing is \conp-complete for all $\ell \geq 5$.
\end{theorem}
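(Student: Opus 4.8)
The plan is to reduce $(2,2)$-3SAT to $(P_3, P_\ell)$-Nonarrowing, paralleling the proof of Theorem~\ref{thm:pkpl} but adapting every gadget to the regime $k = 3$, where avoiding a red $P_3$ forces the red edges to form a matching. Given an instance $\phi$, I would build $G_\phi$ from one variable gadget per variable (the graph of Figure~\ref{fig:p3pl}) and one clause gadget per clause, linking them by contracting each output vertex of a variable gadget with an input vertex of the clause gadget for the clause in which that literal occurs, exactly as in Theorem~\ref{thm:pkpl}. The goal is to show that $G_\phi$ is $(P_3,P_\ell)$-good if and only if $\phi$ is satisfiable.

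The heart of the argument is the analysis of the auxiliary graph $H$. First I would prove a lemma stating that in every $(P_3,P_\ell)$-good coloring of $H$ the color of the designated edge $(a,b)$ is a free binary choice, and that this choice rigidly determines the colors of all remaining edges of $H$ by forced propagation: once $(a,b)$ is red, every other edge at $b$ must be blue (else a red $P_3$, i.e.\ two adjacent red edges, appears), which in turn forces the incident transmitters to cap the relevant paths at precisely the prescribed lengths; the blue case is symmetric. The $(3,\ell,2)$-red-transmitters, which can only enforce a red $P_2$ since for $k=3$ the sole admissible red length is $x=2$, and the $(3,\ell,\ell-6)$-blue-transmitter are placed so that the output carries exactly a strict red $P_2$ (false) or a strict blue $P_3$ (true). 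The variable gadget glues two copies of $H$ at the vertex $y$ and along the blue $x$-to-$x'$ path so that if both copies colored $(a,b)$ identically, either a red $P_3$ forms at $y$ or a blue $P_\ell$ forms along the joined path together with the attached blue transmitter; hence the two copies must disagree, which is what realizes a variable together with its negation.

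I would then verify a clause gadget analogous to that of Figure~\ref{fig:pkpl-clause}: it admits a good coloring precisely when it does not receive the all-false input (three incoming strict red $P_2$ signals), so that the three literal signals entering a clause behave as a logical OR. Combining the variable lemma with this property yields the biconditional between satisfying assignments of $\phi$ and good colorings of $G_\phi$, and since $(2,2)$-3SAT is \np-complete and $(P_3,P_\ell)$-Arrowing is in \conp, \conp-completeness follows. The case $\ell = 5$ I would handle by the stated modification, replacing the $\ell-5$ transmitter-attached vertices with the added edge $(a,d)$ in $H$, after which the same forced-propagation analysis applies.

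The main obstacle I anticipate is controlling the long blue paths. Because paths are connected, contracting output vertices into clause gadgets can splice together blue segments from different gadgets, and I must rule out that any such concatenation reaches length $\ell$ in a coloring that is meant to be good, while simultaneously ensuring that the intended blue $P_\ell$ really does appear in the all-false clause configuration. This requires tracking, for each boundary vertex, the exact strict length of the longest blue path ending there in every admissible coloring---precisely what the transmitter definitions are designed to pin down---and checking that no two red edges ever meet to violate the matching condition. The correctness of this length bookkeeping, together with the existence of the required transmitters (deferred to Section~\ref{sec:transmit}), is where the real work lies.
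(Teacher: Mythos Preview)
Your plan matches the paper's proof almost exactly: the same reduction from $(2,2)$-3SAT, the same variable gadget built from two copies of $H$ glued so that identical states force a red $P_3$ at $y$ or a blue $P_\ell$ through $x,x'$, the same clause gadget as in Theorem~\ref{thm:pkpl} (re-verified for the new inputs), and the same $\ell=5$ modification.

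Two specific points where your write-up diverges from what actually works. First, the ``true'' signal in the $k=3$ regime is a strict blue $P_2$, not a blue $P_3$; both true and false are carried by $P_2$'s here, which is precisely why the clause-gadget colorings have to be redone (Figure~\ref{fig:p3pl-clause}) rather than reused verbatim. Second, your claim that ``the blue case is symmetric'' in the analysis of $H$ is false: for $(P_3,P_\ell)$ the two colors behave completely differently, and when $(a,b)$ is blue there is no local constraint at $b$ forcing the other incident edges red. The paper's actual mechanism is that $(a,b)$ and $(c,d)$ are first shown to have the \emph{same} color (otherwise a specific blue $P_\ell$ through $a,v_1,\ldots,v_{\ell-5},d,c,b,x$ appears); then if both are blue a blue $C_{\ell-1}$ forms and all edges leaving the cycle must be red, while if both are red a blue $P_{\ell-1}$ along $w,a,v_1,\ldots,v_{\ell-5},d,z$ forces the output edges red. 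Your ``forced propagation from $b$'' story does not by itself pin down the coloring of $H$; you need this global coupling argument.
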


\begin{proof}
We proceed as in the proof of Theorem~\ref{thm:pkpl}. The variable gadget is shown in Figure~\ref{fig:p3pl}. Blue (resp., red) $P_2$'s incident to vertices marked \textbf{U} and \textbf{N} correspond to true (resp., false) signals. The clause gadget is the same as Theorem~\ref{thm:pkpl}'s, but the good colorings are different since the inputs are red/blue $P_2$'s instead. These colorings are shown in the appendix in Figure~\ref{fig:p3pl-clause}.

Suppose $\ell \geq 6$.
Let $H$ be the graph circled with a dotted line in Figure~\ref{fig:p3pl}. We first discuss the properties of $H$.
Note that any edge adjacent to a red $P_2$ must be colored blue to avoid a red $P_3$.
Let $v_1, v_2, \ldots, v_{\ell-5}$ be the vertices connected to $(3,\ell,2)$-red-transmitters such that $v_1$ is adjacent to $a$.
Observe that $(a,b)$ and $(c,d)$ must always be the same color; if, without loss of generality, $(a,b)$ is red and $(c,d)$ is blue, a blue $P_\ell$ is formed via the sequence $a, v_1, \ldots, v_{\ell-5}, d, c, b, x$.
In the coloring where $(a,b)$ and $(c,d)$ are blue, the vertices $a, v_1, \ldots, v_{\ell-5}, d, c, b$ form a blue $C_{\ell-1}$, and all edges going out from the cycle must be colored red to avoid blue $P_{\ell}$'s. 
This forces the vertices marked \textbf{U} to be strict endpoints of blue $P_2$'s.
If $(a,b)$ and $(c,d)$ are red, $w,a,v_1,\ldots,v_{\ell-5},d,z$ forms a blue $P_{\ell-1}$, forcing the vertices marked \textbf{U} to be strict endpoints of red $P_2$'s. Moreover, $(x,b)$ and $(y,c)$ must also be blue.

With these properties of $H$ in mind, the functionality of the variable gadget described in Figure~\ref{fig:p3pl}'s caption is easy to follow. The $\ell = 5$ case uses a slightly different $H$, also described in the caption.
    \qed
\end{proof}

\begin{theorem}
\label{thm:pkpk}
    $(P_k, P_k)$-Arrowing is \conp-complete for all $k \geq 4$.
\end{theorem}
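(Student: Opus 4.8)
\textbf{Proof proposal for Theorem~\ref{thm:pkpk}.}

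The plan is to reduce an \np-complete SAT variant to $(P_k, P_k)$-Nonarrowing, mirroring the structure of the proofs of Theorems~\ref{thm:pkpl} and~\ref{thm:p3pl} but now using the symmetric $(k,x)$-transmitters rather than the color-specific red/blue transmitters. The essential new difficulty is that when $k = \ell$ there is no longer an asymmetry between the two colors to exploit: a $(k,x)$-transmitter only guarantees that its output vertex is a strict endpoint of a monochromatic $P_x$ in \emph{one} of the two colors, with the color itself not fixed in advance. Because of this, I expect the natural target problem to be a ``not-all-equal'' style variant, and I would reduce from Positive NAE E3SAT-4, whose symmetry under global color-swap matches the symmetry of the $k=\ell$ setting (a good coloring and its red/blue reversal are both good). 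The correspondence I would aim for is: one color at an output vertex encodes \emph{true} and the other encodes \emph{false}, with the understanding that the whole assignment is only determined up to a global swap, exactly as an NAE assignment is determined up to global negation.

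First I would build a variable gadget $VG$, analogous to Figure~\ref{fig:pkpl-var}, whose output vertices come in a single type (since the problem is positive, there are no negated literals to distinguish) and which forces all outputs of a single variable to carry a consistent signal: in any good coloring, the outputs are simultaneously strict endpoints of a red $P_{k-1}$ or simultaneously strict endpoints of a blue $P_{k-1}$, and these two alternatives are genuinely both realizable. I would assemble this from $(k,x)$-transmitters attached to a small core (an edge $(a,b)$ whose color propagates, as in the $4 \le k < \ell$ construction), using the propagation argument that coloring $(a,b)$ one color forces adjacent edges to the opposite color to avoid a monochromatic $P_k$, which in turn pins down the strict-endpoint status of the output vertices. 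Second I would build a clause gadget $CG$ that takes three input signals and admits a good coloring precisely when the three inputs are \emph{not} all the same color---this is the NAE condition. The key check is that three equal-color input paths (say three incoming red $P_{k-1}$'s) force a red $P_k$ somewhere inside $CG$, whereas any mixed input can be completed to a good coloring; I would verify this by a finite case analysis over the at-most-$2^3$ input combinations, reduced by the color-swap symmetry to essentially the ``all equal'' versus ``not all equal'' dichotomy.

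To assemble $G_\phi$, for each variable of $\phi$ I add a copy of $VG$ and for each clause a copy of $CG$, then contract each output vertex of a variable gadget with a fresh input vertex of the clause gadget in which that variable appears; the degree bound (each variable occurs at most four times) guarantees $VG$ needs only a bounded number of outputs, so the gadgets are of constant size and the reduction runs in polynomial time. The correspondence between satisfying NAE-assignments of $\phi$ and $(P_k, P_k)$-good colorings of $G_\phi$ then follows: a good coloring assigns each variable a color (true/false, modulo the global swap), the clause gadgets being colorable exactly encodes that no clause is monochromatic, i.e.\ every clause is NAE-satisfied. Combined with membership in \conp\ (noted in the introduction), this gives \conp-completeness, and I would defer the existence of the $(k,x)$-transmitters to Section~\ref{sec:transmit}.

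The step I expect to be the main obstacle is the design and verification of the clause gadget under the symmetric setting. With $k = \ell$ the gadget must reject \emph{only} the all-same-color inputs while accepting all six mixed combinations, and unlike the $k < \ell$ case there is no spare color to absorb stray paths, so controlling the longest monochromatic paths inside $CG$---ensuring no accidental red or blue $P_k$ is created across the gadget's internal structure for any of the admissible inputs---requires threading the transmitter lengths $x$ very carefully. I anticipate that, as in Theorem~\ref{thm:pkpl}, making this work will hinge on choosing the transmitter parameters so that incoming signals of length $k-1$ leave exactly enough room (one more vertex) to either force or avoid a $P_k$ depending on the input pattern, and the bulk of the real work will be confirming the finite table of good colorings for the clause gadget rather than the routine global assembly argument.
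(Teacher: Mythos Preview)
Your proposal is correct and matches the paper's approach almost exactly: the paper reduces Positive NAE E3SAT-4 to $(P_k,P_k)$-Nonarrowing using $(k,x)$-transmitters, a variable gadget with four unnegated outputs that are all strict endpoints of a monochromatic $P_{k-1}$ of a common (but not predetermined) color, and an NAE-style clause gadget that is good iff the three inputs are not all the same color. The one detail you should anticipate is that the generic clause gadget the paper builds works only for $k\ge 5$; for $k=4$ the paper either defers to Rutenburg's original result or supplies a separate hand-crafted clause gadget, so your finite case analysis will indeed break at $k=4$ and you will need to treat that case on its own.
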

\begin{figure}[t]
    \centering
    \includegraphics[width=0.8\textwidth]{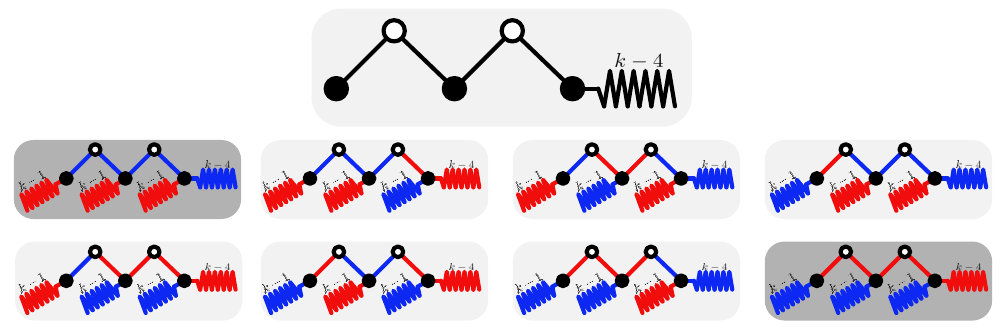}
    \caption{The clause gadget for $(P_k, P_k)$-Nonarrowing. The format is similar to Figure~\ref{fig:pkpl-clause}.}
    \label{fig:pkpk-clause}
\end{figure}

\begin{figure}[t]
    \centering
    \includegraphics[width=0.8\textwidth]{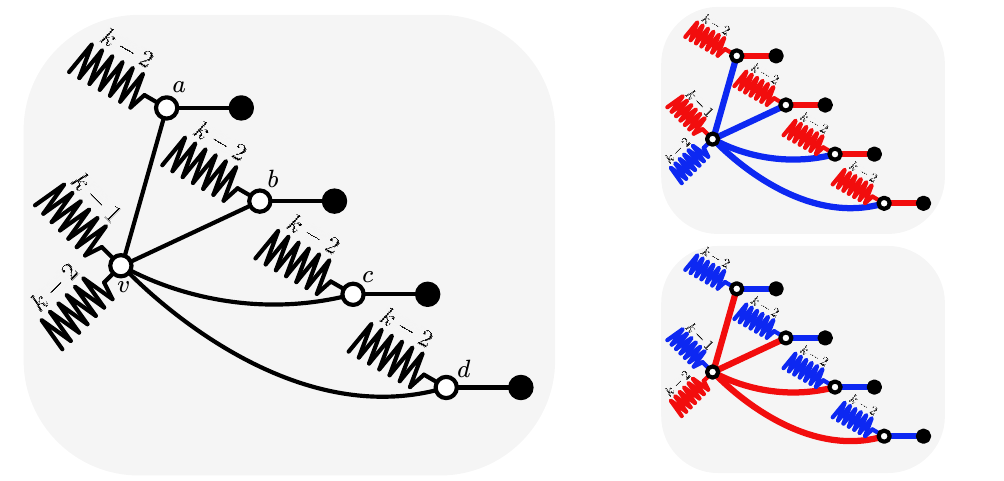}
    \caption{The variable gadget for $(P_k, P_k)$-Nonarrowing. Observe that the transmitters connected to $v$ must have different colors; otherwise, a red or blue $P_{k-1 + k-2 -1}$ is formed, which is forbidden when $k \geq 4$. When the $(k,k-1)$-transmitter is red, $v$'s other neighboring edges must be blue. Thus, vertices $a,b,c$, and $d$ are strict endpoints of blue $P_{k-1}$'s, causing the output vertices (filled) to be strict endpoints of red $P_{k-1}$'s. A similar situation occurs when the $(k,k-1)$-transmitter is blue. Both $(P_k, P_k)$-good colorings are shown on the right.}
    \label{fig:pkpk}
\end{figure}
\begin{proof}
    For $k=4$, Rutenburg showed that the problem is coNP-complete by providing gadgets that reduce from an NAE SAT variant~\cite{rut:c:graph-coloring}. For $k \geq 5$, we take a similar approach and reduce
    Positive NAE E3SAT-4
    to $(P_k, P_k)$-Nonarrowing using the clause and variable gadgets described in Figures~\ref{fig:pkpk-clause} and~\ref{fig:pkpk}.
    The variable gadget has four output vertices, all of which are unnegated. Without loss of generality, we assume that blue $P_{k-1}$'s correspond to 
    true signals.
    The graph $G_\phi$ is constructed as in the proofs of Theorems~\ref{thm:pkpl} and~\ref{thm:p3pl}.
    Our variable gadget is still valid when $k = 4$, but the clause gadget does not admit a $(P_4,P_4)$-good coloring for all the required inputs. In Figure~\ref{fig:p4p4} in the appendix, we show a different clause gadget that can be used to show the hardness of $(P_4, P_4)$-Arrowing using our reduction.
    \qed
\end{proof}

\subsection{Existence of Transmitters}
\label{sec:transmit}

Our proofs for the existence of transmitters are corollaries of the following.

\begin{lemma}
\label{lem:k-l-transmit}
For integers $k, \ell$,
where $3 \leq k < \ell$,
$(k,\ell,k-1)$-red-transmitters exist.
\end{lemma}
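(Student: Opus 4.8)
The plan is to realize the transmitter as a carefully modified copy of the largest complete graph that fails to arrow $(P_k,P_\ell)$, using the known path--path Ramsey number together with Hook's structural description of the extremal colorings. By the Gerencs\'er--Gy\'arf\'as formula, $R(P_k,P_\ell)=\ell+\lfloor k/2\rfloor-1$, so for $N=\ell+\lfloor k/2\rfloor-2$ the graph $K_N$ is $(P_k,P_\ell)$-good while $K_{N+1}$ arrows. The engine of the proof is Hook's characterization~\cite{Ho}: every $(P_k,P_\ell)$-good coloring of $K_N$ is, up to relabeling, the canonical split coloring in which a set $B$ of $\ell-1$ vertices induces a blue $K_{\ell-1}$, a set $A$ of $\lfloor k/2\rfloor-1$ vertices has all of its incident edges red, and the red graph is therefore the join of a clique on $A$ with the independent set $B$ (a complete split graph whose longest path has $2|A|+1$ vertices, which is at most $k-1$, and whose longest blue path has $\ell-1$ vertices).

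First I would read off, from this canonical structure, where a candidate transmitter vertex should sit. A clique vertex $u\in A$ carries no blue edge, but is only the endpoint of a red path on $2|A|$ vertices; an independent vertex $w\in B$ is the strict endpoint of a red path on $2|A|+1=2\lfloor k/2\rfloor-1$ vertices, which is exactly $k-1$ when $k$ is even, yet $w$ carries blue edges. Since we need $v$ to be simultaneously blue-free and the strict endpoint of a red $P_{k-1}$, neither vertex works verbatim, and this is precisely what forces an augmentation rather than a bare complete graph.

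The construction I would then carry out is to obtain $G$ from $K_N$ by (i) singling out one distinguished vertex $v$ and deleting the edges joining $v$ to the rest of a fixed blue clique, so that $v$ is pinned into the red/independent role and carries no blue edge in any good coloring, and (ii) when $k$ is odd, appending a single extra vertex joined in red to the anchored red path, lengthening it from $k-2$ to exactly $k-1$ vertices. I would exhibit one explicit good coloring of $G$ (namely the canonical split coloring adapted to the deleted edges) to confirm that $G$ is $(P_k,P_\ell)$-good, and then argue soundness: in any good coloring of $G$, the complete part still forces, via Hook's theorem applied to that subgraph, the canonical structure with $v$ occupying the red role, so $v$ is incident only to red edges and is the endpoint of a red $P_{k-1}$; because $G$ is good it contains no red $P_k$, so $k-1$ is maximal and $v$ is a \emph{strict} endpoint, as required.

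The hard part will be breaking the symmetry of the ``floating'' red set $A$: Hook's characterization pins down the coloring type of $K_N$ only up to relabeling, so the delicate step is to add or delete just enough edges to force the specific vertex $v$ into the blue-free red position \emph{without} enlarging the red structure into a red $P_k$ or creating a blue $P_\ell$, and while preserving at least one global good coloring. Coupled with this is the exact-length bookkeeping, where the parity of $k$ shifts $2\lfloor k/2\rfloor$ by one and calls for the small red extension in the odd case, and the technical point that once $G$ is no longer complete one must re-derive the needed rigidity by applying Hook's theorem only to the surviving complete subgraph and checking that the modifications cannot introduce alternative colorings.
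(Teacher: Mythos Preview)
Your proposal has a genuine gap at the key step. First, you oversimplify Hook's theorem: the paper's version (Theorem~\ref{thm:CritK-uneq}) gives three distinct types of good colorings of $K_{R(P_k,P_\ell)-1}$, with the allowed types depending on the parity of $k$ and on whether $\ell=k+1$, so your argument would already need more case analysis than you indicate. More seriously, your construction---delete from $K_N$ the edges joining a distinguished vertex $v$ to a fixed set of $\ell-2$ vertices---does not pin $v$ as you hope. Once those edges are gone, the largest complete subgraph of $G$ has only $N-1$ vertices, and Hook's rigidity result no longer applies: $K_{N-1}$ admits many $(P_k,P_\ell)$-good colorings that are not of split type. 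Concretely, take $k=4$, $\ell=5$ (so $N=5$, $|A|=1$); your $G$ is a $K_4$ with a pendant vertex $v$. Coloring the $K_4$ entirely blue is $(P_4,P_5)$-good, forces the pendant edge red, and leaves $v$ a strict endpoint of a red $P_2$, not the required $P_3$. So the step you yourself flag as ``the hard part''---re-deriving rigidity from the surviving complete subgraph---actually fails, and your appended red vertex for odd $k$ has nothing rigid to attach to.

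The paper's construction avoids this trap by going in the opposite direction: rather than deleting edges from a single $K_r$, it chains many full copies of $K_r$ together via single edges (the $(K_r,u,\ell-1)$-thread of Definition~\ref{def:thread}). An edge joining two intact critical $K_r$'s is highly constrained---both endpoints are forced into the small $A$-side of their respective Hook colorings, and the edge itself is forced blue---so a chain of $\ell-1$ such edges yields a forced blue $P_{\ell-1}$, which in turn forces the final pendant edge red and makes the thread-end a strict red $P_{k-1}$ endpoint. The odd-$k$ cases and the $\ell=k+1$ case require additional leaf attachments to eliminate Type~2 and Type~3 colorings before the thread argument goes through. In short, the rigidity you need comes from gluing critical complete graphs together, not from carving one up.
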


\begin{lemma}
\label{lem:k-k-transmit}
For all $k \geq 3$,
$(k,k-1)$-transmitters exist.
\end{lemma}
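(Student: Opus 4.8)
The plan is to prove Lemma~\ref{lem:k-k-transmit} by constructing a $(k,k-1)$-transmitter explicitly, building on the $(k,\ell,k-1)$-red-transmitters whose existence is granted by Lemma~\ref{lem:k-l-transmit}. The essential difficulty is that in the diagonal case $\ell = k$, a transmitter cannot deterministically force a \emph{red} path (as the off-diagonal version does), because the coloring is symmetric under swapping red and blue; the very act of insisting on one color can always be mirrored. This is why the definition of a $(k,x)$-transmitter only demands that $v$ be the strict endpoint of a monochromatic $P_x$ \emph{in some color}, with no incident edge of the opposite color. So the goal is weaker but the symmetry must be respected throughout.

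First I would recall, from Hook's classification of the extremal colorings of complete graphs that do not arrow $(P_k,P_k)$, a small $(P_k,P_k)$-good graph $B$ with a distinguished vertex $v$ that, in \emph{every} good coloring, is incident only to edges of a single color and sits at the strict end of a $P_{k-1}$ of that color. The natural candidate is a suitable $K_m$ or $K_m - e$: Hook tells us exactly which complete graphs admit $(P_k,P_k)$-good colorings and what those colorings look like, and these are highly structured (each color class is a near-union of small complete graphs), so one can pinpoint a vertex whose monochromatic reach is forced to be exactly $k-1$. The key property to extract is rigidity: the colorings of the host graph are constrained enough that the longest monochromatic path ending at $v$ has a fixed length in whichever color $v$'s incident edges take, and $v$ never sees both colors.

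The main steps, in order, are: (1) fix the host graph $B$ from Hook's list together with the distinguished vertex $v$; (2) verify $B$ is $(P_k,P_k)$-good by exhibiting at least one good coloring, so the transmitter is nontrivially realizable; (3) show that in \emph{every} good coloring of $B$, all edges at $v$ share a color and $v$ is the strict endpoint of a $P_{k-1}$ in that color---this amounts to arguing that a shorter monochromatic path at $v$ would force an illegal $P_k$ elsewhere, while a longer one is impossible since $B$ is $(P_k,P_k)$-good; and (4) if the raw host only guarantees a path of length shorter than $k-1$, attach a short forcing chain (or a copy of a $(k,\ell,\cdot)$-type gadget used in the symmetric setting) to extend the strict endpoint up to exactly $x = k-1$, checking that attachment creates no new $P_k$ in either color. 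Since $x$ ranges over $\{2,\ldots,k-1\}$, I would finally obtain smaller $x$ either by truncating the forcing chain or by a separate small construction, handling $x=2$ (a single forced pendant edge whose color is pinned by a nearby forced $P_{k-1}$) as a base case.

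The step I expect to be the main obstacle is step (3): proving the forcing is \emph{strict}, i.e.\ that $v$'s monochromatic path has length exactly $k-1$ and not merely at most $k-1$, while simultaneously ensuring $v$ touches only one color. The upper bound is free from $(P_k,P_k)$-goodness, but the matching lower bound requires a genuinely combinatorial argument showing that any good coloring is forced into the rigid Hook structure around $v$; this is where the symmetry of the diagonal case makes a clean ``orientation'' impossible and one must argue color-agnostically. I would control this by choosing $B$ so that removing or recoloring any edge at $v$ propagates a monochromatic $P_k$ through the dense part of $B$, using Hook's description of the extremal good colorings as the source of that rigidity.
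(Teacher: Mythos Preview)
Your outline has a concrete gap at step~(3). A single $K_m$ or $K_m-e$ cannot serve as the host $B$: in the critical graph $K_r$ with $r = R(P_k,P_k)-1$, every $B$-vertex is incident to edges of both colors (blue inside $B$, red to $A$) in every Type~1 coloring, while $A$-vertices have arbitrarily colored edges within $A$ whenever $|A|\geq 2$; and even when $|A|=1$, an $A$-vertex is only the strict endpoint of a $P_{k-2}$ (or $P_{k-3}$ for odd $k$) in Type~1, not $P_{k-1}$. Attaching a pendant to an $A$-vertex does not fix this, because nothing forces the pendant edge's color: if the $K_r$ is Type~1a and the pendant edge is blue, the pendant is only a strict blue $P_2$ endpoint, which is a valid good coloring. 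So your step~(4) chain idea fails too---the problem is not merely length but that the color of the extension edge is unconstrained.

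The paper's construction supplies exactly the missing idea: glue \emph{two} copies of $K_r$ at a single vertex $u$ and attach a pendant $v$ to $u$. The shared vertex cannot be a $B$-vertex in either copy (a $B$-vertex is already a strict endpoint of a red and a blue $P_{k-1}$, so any extra edge creates a $P_k$), so $u$ is an $A$-vertex in both; and the two copies must take opposite types (1a vs.\ 1b), else the two red (or two blue) $P_{k-2}$'s at $u$ concatenate to a forbidden path. Hence $u$ is simultaneously the strict endpoint of a red $P_{k-2}$ and a blue $P_{k-2}$, so whichever color $(u,v)$ receives, the degree-one vertex $v$ is the strict endpoint of a $P_{k-1}$ in that color and trivially sees no edge of the other color. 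For odd $k$ an extra layer of leaf attachments is used to rule out Type~2 colorings before this argument applies. This two-copy gluing is the step your plan lacks; the ``rigidity from Hook'' you invoke is real, but it only becomes strong enough once two critical cliques are forced into complementary types at a shared $A$-vertex.
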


In the interest of saving space, we only present the proof of one case (when $k$ is even) of Lemma~\ref{lem:k-l-transmit} in our main text, and defer the rest to the appendix. 
We construct these transmitters by
carefully combining copies of complete graphs.
The Ramsey number $R(P_k, P_\ell)$ is defined as the smallest number $n$ such that $K_n \ra (P_k, P_\ell)$.
We know that $R(P_k, P_\ell) = \ell + \lfloor k/2 \rfloor - 1$, where 
$2 \leq k \leq \ell$~\cite{gerencser1967ramsey}. 
In 2015,
Hook characterized the $(P_k, P_\ell)$-good 
colorings of all ``critical'' complete graphs: 
$K_{R(P_k, P_\ell) - 1}$.
We summarize Hook's results below.\footnote{We note that Hook's ordering convention differs from ours, i.e., they look at $(P_\ell, P_k)$-good colorings. Moreover, they use $m$ and $n$ in lieu of $k$ and $\ell$.}

\begin{figure}[t]
    \centering
    \includegraphics[width=0.8\textwidth]{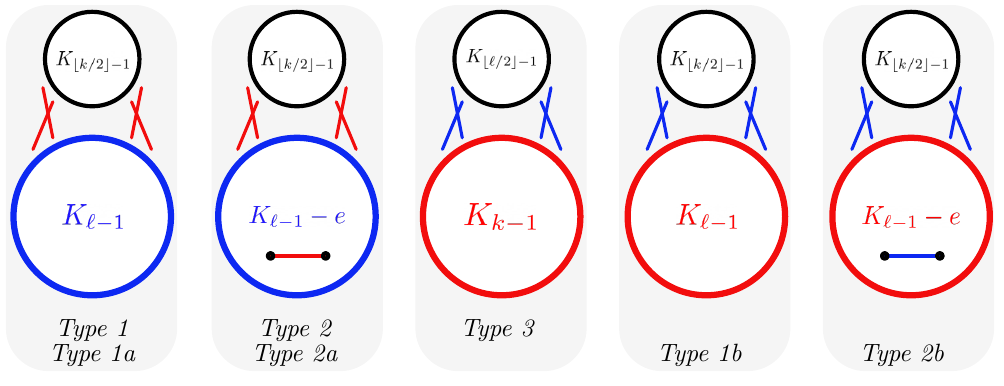}
    \caption{Illustrations of $(P_k, P_\ell)$-good colorings of $K_{R(P_k, P_\ell) - 1}$.}
    \label{fig:kr-col}
\end{figure}

\begin{theorem}[Hook~\cite{Ho}]
\label{thm:CritK-uneq}
Let $4 \leq k < \ell$ and $r = R(P_k, P_\ell) - 1$.
The possible $(P_k, P_\ell)$-good colorings of $\RG$ can be categorized into three types. In each case, $V(G)$ is partitioned into sets $A$ and $B$. The types are defined as follows:
\begin{itemize}
    \item Type 1. 
    Let $|A| = \lfloor k/2 \rfloor - 1$ and $|B| = \ell - 1$.
    Each edge in $E(B)$ must be blue, and each edge in $E(A, B)$ must be red. Any coloring of $E(A)$ is allowed.
    \item Type 2. 
    Let $|A| = \lfloor k/2 \rfloor - 1$ and $|B| = \ell - 1$, and let $b \in E(B)$. Each edge in $E(B) \setminus \{b\}$ must be blue, and each edge in $E(A, B) \cup \{b\}$ must be red. Any coloring of $E(A)$ is allowed.
    \item Type 3. Let $|A| = \lfloor \ell/2 \rfloor - 1$ and $|B| = k - 1$. Each edge in $E(B)$ must be blue, and each edge in $E(A, B)$ must be red. Any coloring of $E(A)$ is allowed.
\end{itemize}
Moreover, the types of colorings allowed vary according to the parity of $k$.
If $k$ is even, then $\RG$ can only have Type 1 colorings. If $k$ is odd and $\ell > k + 1$, then $\RG$ can only have Type 1 and 2 colorings. If $k$ is odd and $\ell = k + 1$, then $\RG$ can have all types of colorings.
\end{theorem}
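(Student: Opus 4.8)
The plan is to prove the two directions of the characterization separately: first that each listed pattern is genuinely $(P_k,P_\ell)$-good, and then---the substantive part---that no other good coloring of $\RG$ exists, where $r = \ell + \lfloor k/2\rfloor - 2$. The forward direction is a routine verification. In every type the blue edges lie inside a blue clique on the $\ell-1$ vertices of $B$ (minus at most the single edge $b$ in Type~2) together with arbitrary blue edges confined to $A$; since $E(A,B)$ is red, no blue edge joins $A$ to $B$, so each blue component has at most $\max(\ell-1,|A|)=\ell-1$ vertices and the blue graph is $P_\ell$-free. Dually, the red graph is the complete red join between $A$ and $B$ (plus arbitrary red edges in $A$, plus the extra red edge in Type~2); because $B$ is red-independent, every red path alternates into $A$, so no two $B$-vertices are consecutive and the longest red path has at most $2|A|+1$ vertices, which is $k-1$ for even $k$ and $k-2$ for odd $k$, the Type~2 edge contributing exactly one more vertex. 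Hence the red graph is $P_k$-free.

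The core is the converse. I would start from an arbitrary good coloring and extract the partition from extremal monochromatic structure. Fix a \emph{longest blue path} $Q$; since blue is $P_\ell$-free, $|V(Q)|\le \ell-1$, so the complement $T = V(\RG)\setminus V(Q)$ has $|T|\ge r-(\ell-1)=\lfloor k/2\rfloor-1$. Maximality of $Q$ forces every edge from its endpoints into $T$ to be red, and P\'osa-type rotation arguments along $Q$ force large red neighborhoods from $T$ into $Q$, since a blue edge reaching $T$ from a rotation endpoint would lengthen $Q$ and contradict $P_\ell$-freeness. The aim of this stage is to show that these forced red edges already assemble a long \emph{alternating} red path through $T$, and that the only way to keep it shorter than $P_k$ is for $|T|$ to equal exactly $\lfloor k/2\rfloor-1$ and for $V(Q)$ to span a blue clique. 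Setting $A=T$ and $B=V(Q)$ then gives $|A|=\lfloor k/2\rfloor-1$, $|B|=\ell-1$, all of $E(A,B)$ red, and all of $E(B)$ blue, which is Type~1.

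To obtain the remaining types I would analyze the slack in the red-path count by parity. When $k$ is even the alternating red path through $A$ already has $2|A|+1=k-1$ vertices, so the red budget is saturated: a single extra red edge inside $B$ would create a red $P_k$, ruling out Type~2, and the size identity $\lfloor\ell/2\rfloor-1+(k-1)\ne r$ (equivalent to $\lceil k/2\rceil\ne\lceil\ell/2\rceil$ for $k<\ell$) rules out Type~3, so only Type~1 survives. When $k$ is odd the count is $k-2$, leaving room for exactly one extra red edge inside $B$; tracking where this edge may sit---it cannot be incident to a vertex already saturated by the alternating path---yields precisely the Type~2 freedom, and when in addition $\ell=k+1$ the sizes $\lfloor\ell/2\rfloor-1$ and $k-1$ sum to $r$, so the whole argument reruns with the two colors exchanged, producing Type~3. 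The clause ``any coloring of $E(A)$ is allowed'' follows because $B$ is red-independent, so each vertex of $A$ is used at most once on any red path and internal red edges of $A$ cannot push the alternating red path beyond $2|A|+1<k$ vertices; dually, a blue path confined to $A$ has at most $|A|\le \ell-1$ vertices and cannot reach $B$ since $E(A,B)$ is red.

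The main obstacle will be the rigidity in the converse: turning the soft maximality facts about the longest blue path into the hard conclusions that $B$ is a \emph{complete} blue clique and $E(A,B)$ is \emph{entirely} red. This is a stability statement---uniqueness of the extremal Gerencs\'er--Gy\'arf\'as configuration one below the Ramsey threshold---and the delicate point is the parity-sensitive accounting separating the even case (no slack, only Type~1) from the odd case (one unit of slack, enabling Types~2 and, when $\ell=k+1$, Type~3). I expect this step to require the refined Erd\H{o}s--Gallai and Kopylov-type bounds on edges in connected $P_k$-free graphs, used to exclude configurations where the red graph is large but path-short; naive edge-counting alone only yields the weaker bound $r\le k+\ell-3$ and cannot detect the $\lfloor k/2\rfloor$ savings that makes the classification sharp.
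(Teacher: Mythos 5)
First, a framing point: the paper does not prove this statement at all---it is quoted as Hook's theorem~\cite{Ho}---so your attempt must stand on its own, and it does not. The substantive direction (that \emph{no} good colorings other than the listed ones exist) is a plan rather than a proof. Everything after ``The core is the converse'' is conditional: ``I would start,'' ``the aim of this stage is to show,'' ``I expect this step to require.'' The one concrete deduction you make---edges from the endpoints (and rotation endpoints) of a longest blue path $Q$ into $T = V(K_r)\setminus V(Q)$ must be red---is correct but is very far from what you then assert, namely that $V(Q)$ spans a blue clique, that $E(T,V(Q))$ is entirely red, and that $|T| = \lfloor k/2\rfloor - 1$ exactly. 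That rigidity claim \emph{is} the theorem, and as stated it is false for odd $k$: Type~2 colorings have a red edge inside $B$, so no argument can conclude ``$B$ is a blue clique'' without the parity distinction built in from the start, not recovered afterwards as ``slack.'' Pointing to Erd\H{o}s--Gallai and Kopylov-type bounds names a toolbox, not an argument; the gap is precisely the stability analysis you defer.

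Second, your ``routine'' forward verification is wrong for Type~3. You verify all types at once by saying the blue edges lie in a blue clique on the $\ell-1$ vertices of $B$ and the red graph is the join of $A$ and $B$, with longest red path $2|A|+1 \le k-1$. But in Type~3, $|A| = \lfloor \ell/2\rfloor - 1$ and $|B| = k-1$; in the only case where Type~3 arises ($k$ odd, $\ell = k+1$) this gives $2|A|+1 = k$, so a complete red join between $A$ and $B$ contains a red alternating $P_k$: the configuration with $E(B)$ blue and $E(A,B)$ red is \emph{not} good. The genuine Type~3 coloring has the colors exchanged ($E(B)$ red, $E(A,B)$ blue); this is what the paper's Observation~1 and appendix actually use ($B3$-vertices are strict endpoints of red $P_{k-1}$'s, $A3$-vertices of blue $P_{\ell-2}$'s), and it means the theorem statement as transcribed in this paper carries a color-swap typo in Type~3 which you should have caught rather than ``verified.'' Your closing remark that Type~3 arises by ``rerunning the whole argument with the two colors exchanged'' is the right idea, but it directly contradicts your own verification paragraph, so even the easy half of your write-up is internally inconsistent.
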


For the case where $k = \ell$, $K_r$ can have Type 1 and 2 colorings as described in the theorem above. Due to symmetry, the colors in these can be swapped and are referred to as Type 1a, 1b, 2a, and 2b colorings.
The colorings described have been illustrated in Figure~\ref{fig:kr-col}.
We note the following useful observation.

\begin{observation}
\label{obs:k-l-types}
Suppose $\ell > k \geq 4$ and $r = R(P_k, P_\ell) - 1$. 
\begin{itemize}
    \item In Type 1 $(P_k, P_\ell)$-good colorings of $K_r$: (1) each vertex in $B$ is a strict endpoint of a blue $P_{\ell - 1}$, (2) when $k$ is even (resp., odd), each vertex in $B$ is a strict endpoint of a red $P_{k - 1}$ (resp., $P_{k-2}$), and (3) when $k$ is even (resp., odd), each vertex in $A$ is a strict endpoint of a red $P_{k - 2}$ (resp., $P_{k-3}$).
    \item In Type 2 $(P_k, P_\ell)$-good colorings of $K_r$: (1) each vertex in $B$ is a strict endpoint of a blue $P_{\ell - 1}$, (2) each vertex in $B$ is a strict endpoint of a red $P_{k - 1}$, and (3) each vertex in $A$ is a strict endpoint of a red $P_{k - 2}$.
    \item In Type 3 $(P_k, P_\ell)$-good colorings of $K_r$: (1) each vertex in $B$ is a strict endpoint of a red $P_{k - 1}$, (2) each vertex in $B$ is a strict endpoint of a blue $P_{\ell - 1}$, and (3) each vertex in $A$ is a strict endpoint of a blue $P_{\ell - 2}$. 
\end{itemize}
\end{observation}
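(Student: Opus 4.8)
The plan is to verify each of the three bullet points by analyzing, for a fixed good coloring of $K_r$ (which by Theorem~\ref{thm:CritK-uneq} is of Type 1, 2, or 3) and a fixed vertex, the longest monochromatic path of the relevant color ending at that vertex. In every type the vertex set splits as $V(K_r) = A \cup B$, one color (call it the \emph{clique color}) is forced on all of $E(B)$, the other color (the \emph{crossing color}) is forced on all of $E(A,B)$, and $E(A)$ is unconstrained; in Types 1 and 3 these are the only constraints, while Type 2 additionally recolors a single edge $b \in E(B)$ with the crossing color. Since the types already encode goodness, the analysis is self-contained combinatorics, and the whole difficulty is that $E(A)$ is arbitrary: I must show that crossing-color edges inside $A$ never lengthen the paths beyond the claimed bounds.

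The engine is a short counting argument. Fix the crossing color $\chi$. In any $\chi$-colored path, two vertices of $B$ can be consecutive only along the special edge $b$ (present only in Type 2), since $E(B)$ otherwise carries the clique color; hence the $B$-vertices of the path are pairwise non-consecutive apart from at most $|S|$ exceptions, where $S$ is the set of crossing-colored edges inside $B$ (so $|S|=0$ in Types 1 and 3, and $|S|=1$ in Type 2). Writing $p$ and $q$ for the numbers of $B$- and $A$-vertices on a $\chi$-path, this yields $p \le q + 1 + |S|$ when both endpoints lie in $B$ and $p \le q + |S|$ when at least one endpoint lies in $A$; together with $q \le |A|$ this bounds the path at $2|A| + 1 + |S|$ vertices (endpoint in $B$) or $2|A| + |S|$ vertices (endpoint in $A$). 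Because $|B| > |A|$ in every type, the matching lower bounds are realized by explicit alternating paths $v, a_1, w_1, a_2, w_2, \dots$ running through all of $A$ (inserting the edge $b$ once in Type 2), so the bounds are tight and the vertex is a \emph{strict} endpoint. For the clique color, any such path starting in $B$ cannot leave $B$ (every $E(A,B)$ edge has the crossing color), so its length equals the longest clique-color path inside $B$, namely $|B|$, realized by a Hamiltonian path of $K_{|B|}$ (of $K_{|B|} - b$ in Type 2, which still admits a Hamiltonian path ending at each vertex since $|B| = \ell-1 \ge 4$).

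It then remains to substitute sizes. For Types 1 and 2 the clique color is blue with $|B| = \ell-1$, giving the blue $P_{\ell-1}$ of bullets 1 and 2(1); the crossing color is red with $|A| = \lfloor k/2\rfloor - 1$, so $2|A|+1+|S|$ and $2|A|+|S|$ reproduce exactly the listed red lengths (the parity split in Type 1 is just $2\lfloor k/2\rfloor - 1$ equaling $k-1$ or $k-2$, and Type 2, which occurs only for odd $k$, gains one from $|S|=1$). For Type 3, which occurs only when $k$ is odd and $\ell = k+1$, the clique $B$ of size $k-1$ is red (giving the red $P_{k-1}$ of 3(1)) and the crossing color on $E(A,B)$ is blue with $|A| = \lfloor \ell/2\rfloor - 1 = (k-1)/2$, so the crossing bounds $2|A|+1 = \ell-1$ and $2|A| = \ell-2$ give 3(2) and 3(3). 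I expect the main obstacle to be precisely the counting argument and its tightness: one must argue carefully that arbitrary crossing-color edges inside the small set $A$ cannot be chained to bypass the ``$B$-vertices are non-adjacent'' bottleneck, and that the single edge $b$ in Type 2 contributes exactly one extra vertex and no more. The remaining substitutions and the verification of $|B| > |A|$ are routine.
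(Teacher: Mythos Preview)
Your proposal is correct and follows essentially the same approach as the paper: the paper packages your counting argument (that on a crossing-colored path the $B$-vertices are pairwise non-adjacent, forcing $p \le q+1$ and hence path length $\le 2|A|+1$) into two standalone lemmas about bipartite-like graphs with $N$ independent, and handles the clique color via Hamiltonian paths in $K_{|B|}$ and $K_{|B|}-e$, exactly as you do. Your uniform treatment of Type~2 via the parameter $|S|$ is slightly cleaner than the paper's ``it is easy to see that the extra red edge \dots increases the length'' remark, but the underlying combinatorics is identical.
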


We justify these claims in the appendix, wherein we also formally define the colorings $K_r$ when $k = \ell$ and justify a similar observation.
Finally, we define a special graph that we will use throughout our proofs.
\begin{definition}[$(H,u,m)$-thread]
\label{def:thread}
Let $H$ be a graph, $u \in V(H)$, and $m \geq 1$ be an integer.
An $(H,u,m)$-thread $G$, is a graph on $m|V(H)| + 1$ vertices constructed as follows.
Add $m$ copies of $H$ to $G$. Let $U_i \subset V(G)$ be the vertex set of the $i^{th}$ copy of $H$, and $u_i$ be the vertex $u$ in $H$'s $i^{th}$ copy. Connect each $u_i$ to $u_{i+1}$ for each $i \in \{1,2,\ldots, m-1\}$. 
Finally, add a vertex $v$ to $G$ and connect it to $u_{m}$. We refer to $v$ as the thread-end of $G$. This graph is illustrated in Figure~\ref{fig:thread}.
\end{definition}

\begin{figure}[t]
    \centering
    \includegraphics[width=\textwidth]{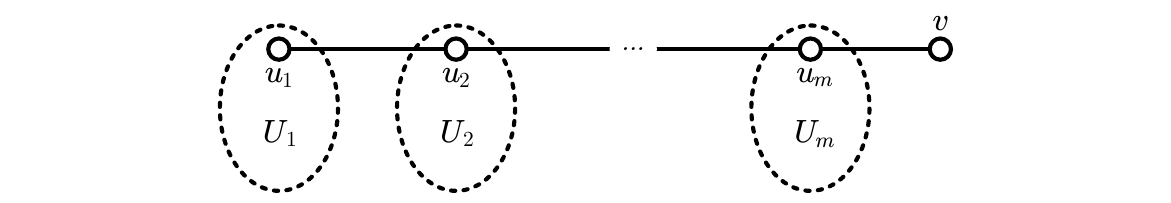}
    \caption{An $(H,u,m)$-thread as described in Definition~\ref{def:thread}.}
    \label{fig:thread}
\end{figure}

Using Theorem~\ref{thm:CritK-uneq}, Observation~\ref{obs:k-l-types}, and Definition~\ref{def:thread}
we are ready to prove the existence of $(k,\ell,k-1)$-red-transmitters and $(k,k-1)$-transmitters via construction.
Transmitters for various cases are shown in Figures~\ref{fig:kl-transmit} and~\ref{fig:kk-transmit}.
We present the proof for one case below and the remaining in the appendix.
\vskip 0.2cm

\noindent\textit{Proof of Lemma~\ref{lem:k-l-transmit} when $k$ is even.}
Let $k \geq 4$ be an even integer and $r = R(P_k, P_\ell) - 1$.
In this case, by Theorem~\ref{thm:CritK-uneq}, only Type 1 colorings are allowed for $\RG$. 
The term $A1$-vertex (resp., $B1$-vertex) is used to refer to vertices belonging to set $A$ (resp., $B$) in a $\RG$ with a Type 1 coloring, as defined in Theorem~\ref{thm:CritK-uneq}.
We first make an observation about the graph $H$, constructed by adding an edge $(u, v)$ between two disjoint $\RG$'s.
Note that $u$ must be an $A1$-vertex, otherwise the edge $(u,v)$ would form a red $P_{k-1}$ or blue $P_{\ell - 1}$ when colored red or blue, respectively (Observation~\ref{obs:k-l-types}).
Similarly, $v$ must also be an $A1$-vertex. Note that $(u,v)$ must be blue; otherwise, by Observation~\ref{obs:k-l-types}, a red $P_{k-2 + k-2}$ is formed, which cannot exist in a good coloring when $k \geq 4$.

\begin{figure}[t]
    \centering
    \includegraphics[width=0.8\textwidth]{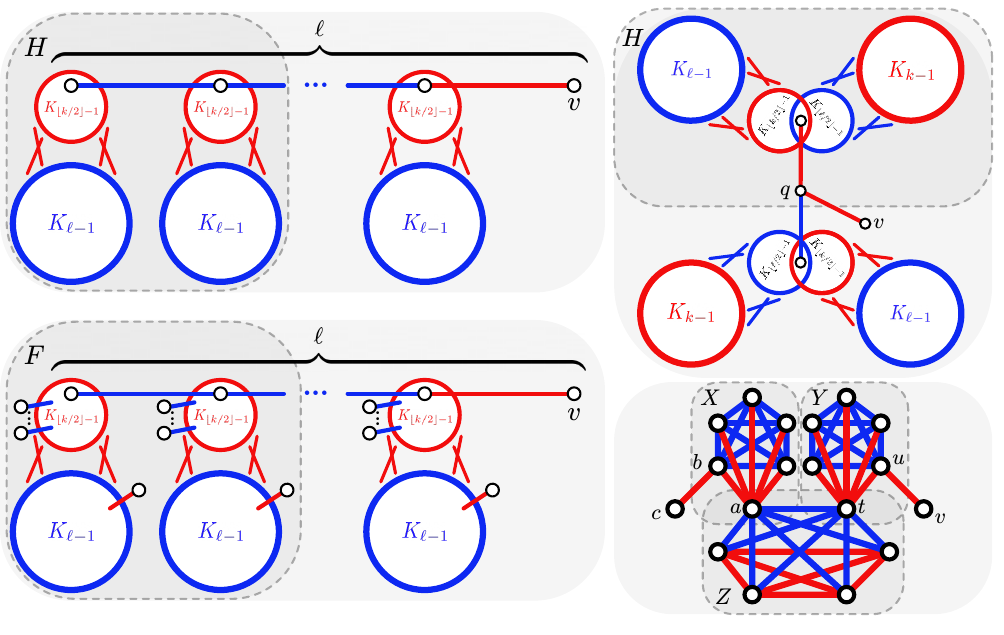}
    \caption{$(k,\ell,k-1)$-red-transmitters for 
    even $k$ with $\ell > k$, odd $k$ with $\ell > k+1$, and odd $k$ with $\ell = k+1$ are shown on the top-left, bottom-left, and top-right, respectively. 
    The latter construction does not work for the case where $k = 5$, so an alternative construction for a $(5,6,4)$-red-transmitter is shown on the bottom-right.
    The graphs ($H$ and $F$)  described in each case are circled so that the proofs are easier to follow. A good coloring is shown for each transmitter.
    }
    \label{fig:kl-transmit}
\end{figure} 

We define the $(k,\ell,k-1)$-red-transmitter, $G$,
as the $(\RG, u, \ell-1)$-thread graph, where $u$ is an arbitrary vertex in $V(\RG)$.
The thread-end $v$ of $G$ is a strict endpoint of a red $P_{k-1}$.
Let $U_i$ and $u_i$ 
be the sets and vertices of $G$ as described in Definition~\ref{def:thread}.
From our observation about $H$, we know that
each edge $(u_i, u_{i+1})$ must be blue. Thus, $u_{\ell-1}$ must be the strict endpoint of a blue $P_{\ell-1}$, implying that $(u_{\ell-1}, v)$ must be red. Since $u_{\ell-1}$ is also a strict endpoint of a red $P_{k-2}$ (Observation~\ref{obs:k-l-types}), $v$ must be the strict endpoint of a red $P_{k-1}$.

For completeness, we must also show that $G$ is $(P_k, P_\ell)$-good.
Let $A_i$ and $B_i$ be the sets $A$ and $B$ as defined in Theorem~\ref{thm:CritK-uneq} for each $U_i$.
Note that the only edges whose coloring we have not discussed are the edges in each $E(A_i)$.
It is easy to see that if each edge in each $E(A_i)$ is colored red, the resulting coloring is $(P_k, P_\ell)$-good. This is because introducing a red edge in $E(A_i)$ cannot form a longer red path than is already present in the graph, i.e., any path going through an edge $(p,q) \in E(A_i)$ can be increased in length by selecting a vertex from $r \in E(B_i)$ using the edges $(p,r)$ and $(r,q)$ instead. This is always possible since $|E(B_i)|$ is sufficiently larger than $|E(A_i)|$.
\qed 
\vskip 0.2cm

Finally, we show how constructing (red-)transmitters where $x = k - 1$ is sufficient to show the existence of all defined transmitters.

\begin{figure}[t]
    \centering
    \includegraphics[width=0.8\textwidth]{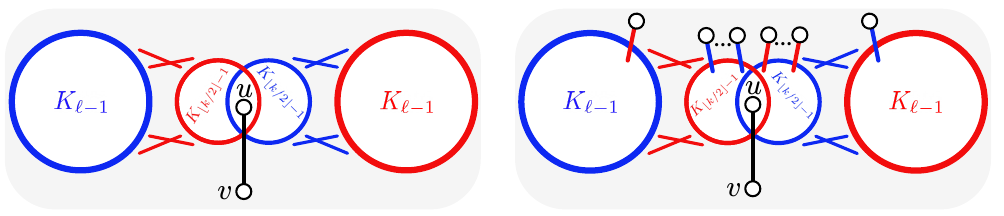}
    \caption{$(k,k-1)$-transmitters for even $k$ (left) and odd $k$ (right).}
    \label{fig:kk-transmit}
\end{figure}

\begin{corollary}
For valid $k, \ell$, and $x$,
$(k,\ell,x)$-blue-transmitters and $(k,\ell,x)$-red-transmitters exist.
\end{corollary}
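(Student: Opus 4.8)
The plan is to bootstrap from the single family guaranteed by Lemma~\ref{lem:k-l-transmit}---the $(k,\ell,k-1)$-red-transmitters---to all remaining red lengths and to every blue-transmitter, using threads (Definition~\ref{def:thread}) of transmitters as the only new ingredient. The key observation is that a transmitter endpoint is \emph{color-pure} and \emph{maximal}: a red-transmitter endpoint $t$ is a strict endpoint of a red $P_{k-1}$ and touches no blue edge, so if we join two such endpoints $t_i,t_{i+1}$ by an edge, coloring it red would create a red $P_{2k-2}\supseteq P_k$ (legitimate since $k\ge 3$); hence that edge is forced blue. Consequently, threading $m$ copies of a red-transmitter along their endpoints forces the spine $t_1 t_2 \cdots t_m$ to be blue, and nothing in blue is longer, because no $t_i$ carries any other blue edge. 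This manufactures, for every $m\le\ell-1$, a vertex that is simultaneously a strict blue $P_m$ endpoint and a strict red $P_{k-1}$ endpoint---exactly the junction we need.

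First I would build the blue-transmitters. Fix $x\in\{2,\dots,\ell-1\}$ and take the $(T,t,x-1)$-thread $G$, where $T$ is a $(k,\ell,k-1)$-red-transmitter with endpoint $t$. Writing $t_1,\dots,t_{x-1}$ for the threaded copies of $t$ and $v$ for the thread-end, the remark above forces $t_1\cdots t_{x-1}$ blue, so $t_{x-1}$ is a strict blue $P_{x-1}$ endpoint that is also a strict red $P_{k-1}$ endpoint. The final edge $(t_{x-1},v)$ cannot be red---that would extend the red $P_{k-1}$ at $t_{x-1}$ into a red $P_k$---so it is blue, making $v$ a strict endpoint of a blue $P_x$ with no incident red edge. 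Since $x\le \ell-1$, this blue path never reaches $P_\ell$, so $G$ is $(P_k,P_\ell)$-good and is the desired $(k,\ell,x)$-blue-transmitter.

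Next I would use the blue-transmitters just obtained---specifically the $(k,\ell,\ell-1)$-ones---to build the red-transmitters by the dual thread. For $x\in\{2,\dots,k-1\}$, thread $x-1$ copies of a $(k,\ell,\ell-1)$-blue-transmitter along their endpoints; now a blue spine edge would create a blue $P_{2\ell-2}\supseteq P_\ell$, so every spine edge is forced red, yielding a strict red $P_{x-1}$ endpoint that is also a strict blue $P_{\ell-1}$ endpoint. The thread-end edge is then forced red, since a blue choice would complete a blue $P_\ell$, so the thread-end is a strict red $P_x$ endpoint with no incident blue edge, i.e., a $(k,\ell,x)$-red-transmitter. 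The orderings work out because $x-1\le k-2$ keeps the red spine below $P_k$ while $\ell-1$ keeps the blue below $P_\ell$, and the chain of dependencies---base red-transmitter $\Rightarrow$ all blue-transmitters $\Rightarrow$ all red-transmitters---is acyclic.

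The main obstacle, and the only place needing genuine care, is verifying \emph{strictness together with $(P_k,P_\ell)$-goodness} in every good coloring rather than just in a canonical one: I must argue that no unintended longer monochromatic path sneaks in. This reduces to a single structural fact---a color-pure endpoint has no edge of the off color, so every long monochromatic path of that color is confined to the thread spine and cannot detour into a copy---which I would isolate as a short lemma about threads of color-pure endpoints and then apply uniformly in both constructions. With that lemma in hand, the blue/red and all length cases of the corollary follow from the two thread constructions above, completing the reduction of every transmitter to the $x=k-1$ red case.
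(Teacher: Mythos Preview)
Your proposal is correct and follows essentially the same route as the paper: thread $x-1$ copies of a $(k,\ell,k-1)$-red-transmitter along its special vertex to obtain a $(k,\ell,x)$-blue-transmitter, then dually thread $(k,\ell,\ell-1)$-blue-transmitters to obtain the remaining red-transmitters. Your added emphasis on the color-purity of the transmitter endpoint (to certify strictness of the thread-end and confinement of blue paths to the spine) is exactly the point the paper leaves implicit, so the two arguments coincide.
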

\begin{proof}
    Let $H$ be a $(k,\ell,k-1)$-red-transmitter where $u \in V(H)$ is the strict endpoint of a red $P_{k-1}$ in all of $H$'s good colorings.
    For valid $x$, 
    the $(H, u, x-1)$-thread graph $G$ is a $(k,\ell,x)$-blue-transmitter, where the thread-end $v$ is the strict endpoint of a blue $P_{x}$ in all good colorings of $G$; to avoid constructing red $P_k$'s each edge along the path of $u_i$'s is forced to be blue by the red $P_{k-1}$ from $H$, where $u_i$ is the vertex $u$ in the $i^{th}$ copy of $H$ as defined in Definition~\ref{def:thread}.
    
    To construct a $(k,\ell,x)$-red-transmitter, we use a similar construction.
    Let $H$ be a $(k,\ell,\ell-1)$-blue-transmitter where $u \in V(H)$ is the strict endpoint of a blue $P_{\ell-1}$ in all good colorings of $H$.
    For valid $x$, 
    the $(H, u, x)$-thread graph $G$ is a $(k,\ell,x-1)$-red-transmitter, where the thread-end $v$ is the strict endpoint of a red $P_{x}$ in all good colorings of $G$. \qed
\end{proof}

\begin{corollary}
For valid $k$ and $x$,
$(k,x)$-transmitters exist.
\end{corollary}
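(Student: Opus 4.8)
The plan is to mimic the thread construction used in the preceding corollary, but starting from a $(k,k-1)$-transmitter instead of a $(k,\ell,k-1)$-red-transmitter. By Lemma~\ref{lem:k-k-transmit}, a $(k,k-1)$-transmitter $H$ exists; let $u \in V(H)$ be its distinguished vertex, which in every $(P_k,P_k)$-good coloring of $H$ is either (1) a strict endpoint of a red $P_{k-1}$ not adjacent to any blue edge, or (2) a strict endpoint of a blue $P_{k-1}$ not adjacent to any red edge. For valid $x$ (that is, $2 \le x \le k-1$), I would take $G$ to be the $(H,u,x-1)$-thread of Definition~\ref{def:thread}, with copies $U_1,\dots,U_{x-1}$, copied vertices $u_1,\dots,u_{x-1}$, and thread-end $v$ adjacent to $u_{x-1}$. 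I claim $G$ is a $(k,x)$-transmitter with distinguished vertex $v$.

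The first step is to show that consecutive copies must be in the same mode. Fix a good coloring of $G$ and restrict it to a copy $U_i$; the restriction is $(P_k,P_k)$-good, so $u_i$ is in mode (1) or (2). The key local fact is that a thread edge incident to a vertex that strictly ends a monochromatic $P_{k-1}$ in its copy cannot repeat that color, lest it complete a monochromatic $P_k$. Thus a red-mode copy forces its outgoing thread edge blue, and a blue-mode copy forces it red. Checking the two mixed transitions then yields a contradiction in each case: if $u_i$ is red-mode and $u_{i+1}$ is blue-mode, the forced-blue thread edge extends $u_{i+1}$'s blue $P_{k-1}$ to a blue $P_k$; if $u_i$ is blue-mode and $u_{i+1}$ is red-mode, coloring the thread edge red completes $u_{i+1}$'s red $P_k$ while coloring it blue completes $u_i$'s blue $P_k$. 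Hence all copies share one mode, and the same reasoning applies to the final edge $(u_{x-1},v)$.

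The second step reads off $v$'s behavior. If every copy is red-mode, all thread edges, including $(u_{x-1},v)$, are blue, so $v,u_{x-1},\dots,u_1$ is a blue $P_x$; since each $u_i$ has no blue edge inside its own copy, this blue path can neither be extended past $u_1$ nor branch, so $v$ is a strict endpoint of a blue $P_x$ and, its only edge being blue, is not adjacent to any red edge, which is mode (2). The all-blue-mode case is symmetric and gives mode (1). Choosing $m=x-1$ guarantees the thread path has exactly $x \le k-1$ vertices, so no monochromatic $P_k$ arises along the thread. For goodness, I would color each copy with a good coloring of $H$ in which $u_i$ is red-mode (such a coloring exists, since swapping colors in any good coloring of $H$ again yields a good coloring) and color all thread edges blue: red paths stay confined within copies, and the blue thread path has only $x \le k-1$ vertices and is disjoint from the copies' internal blue subgraphs at each $u_i$, so no monochromatic $P_k$ appears.

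The main obstacle I anticipate is the first step, namely establishing that the mode cannot change along the thread, since this is exactly what pins the monochromatic path at $v$ to length precisely $x$ in \emph{every} good coloring; the verification of goodness and of the strictness of the endpoint is then bookkeeping directly analogous to the previous corollary.
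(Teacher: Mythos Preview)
Your proposal is correct and follows essentially the same approach as the paper: take the $(H,u,x-1)$-thread of a $(k,k-1)$-transmitter, argue that all copies must be in the same mode (else the connecting edge cannot be colored), and conclude that the thread-end is a strict endpoint of a monochromatic $P_x$ of the opposite color. Your write-up is in fact more thorough than the paper's, which omits the explicit case analysis for the mode-consistency step and the verification that $G$ is $(P_k,P_k)$-good.
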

\begin{proof}
    Let $H$ be a $(k,k-1)$-transmitter where $u \in V(H)$ is the strict endpoint of a red/blue $P_{k-1}$ in all of $H$'s good colorings.
    For valid $x$, 
    the $(k, u, x-1)$-thread graph $G$ is a $(k,x)$-transmitter, where the thread-end $v$ is the strict endpoint of a red or blue $P_{x}$ in all good colorings of $G$. 
    Let $u_i$ be the vertex as defined in Definition~\ref{def:thread}.
    Each $u_i$ is the strict endpoint of $P_{k-1}$ of the same color; otherwise, the edge between two $u$'s cannot be colored without forming a red or blue $P_{k}$. Thus, 
    each such edge must be colored red (resp., blue) by the blue (resp., red) $P_{k-1}$ coming from $H$. \qed
\end{proof}

\section{Conclusion and Future Work}
\label{sec:conclude}
A major and very difficult goal is to classify the complexity for $(F,H)$-Arrowing for all fixed $F$ and $H$. We conjecture that in this much more general case a dichotomy theorem still holds, with these problems being either in P or coNP-complete. This seems exceptionally difficult to prove. 
To our knowledge, all known dichotomy theorems 
for graphs classify the problem according to one fixed graph, and the polynomial-time characterizations are much simpler than in our case. We see this paper as an important first step in accomplishing this goal.

\section*{Acknowledgments}

This work was supported in part by grant NSF-DUE-1819546.
We would like to thank the anonymous reviewers for their valuable comments.
\bibliographystyle{splncs04}
\bibliography{mybib}

\section*{Appendix}

\subsection*{Clause Gadget for $(P_3, P_\ell)$- and $(P_4, P_4)$-Nonarrowing}

In Figure~\ref{fig:p3pl-clause}, we present the clause gadget for $(P_3, P_\ell)$-Nonarrowing.
In Figure~\ref{fig:p4p4}, we present the clause gadget for $(P_4, P_4)$-Nonarrowing, inspired by the gadget used in Rutenburg's hardness proof~\cite{rut:c:graph-coloring} for the same problem.

\begin{figure}[ht]
    \centering
    \includegraphics[width=0.8\textwidth]{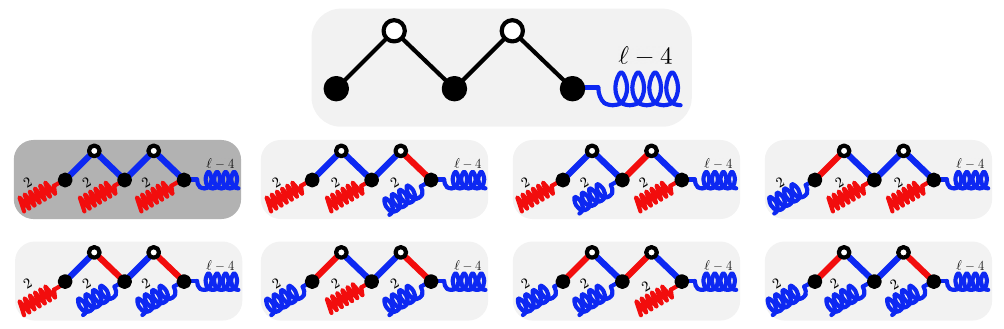}
    \caption{The clause gadget for $(P_3, P_\ell)$-Nonarrowing. The format is similar to Figure~\ref{fig:pkpl-clause}.}
    \label{fig:p3pl-clause}
\end{figure}

\begin{figure}[ht]
    \centering
    \includegraphics[width=0.8\textwidth]{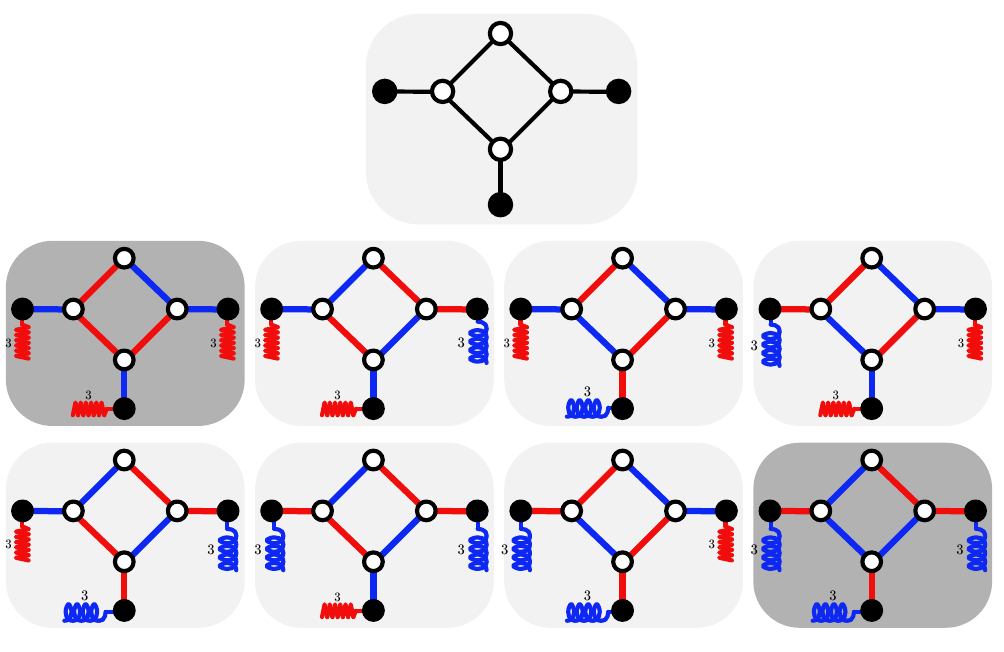}
    \caption{The clause gadget for $(P_4, P_4)$-Nonarrowing. This is similar to the gadget used in Rutenburg's hardness proof~\cite{rut:c:graph-coloring}, wherein the clause gadget was just the $C_4$.}
    \label{fig:p4p4}
\end{figure}

\subsection*{Existence of Transmitters}

We first state Hook's theorem for the symmetric case (where $k = \ell$) and state an observation about the properties of said colorings.

\begin{theorem}[Hook~\cite{Ho}]
\label{thm:CritK-eq}
Let $k \geq 4$ and $r = R(P_k, P_k) - 1$.
The possible $(P_k, P_k)$-good colorings of $\RG$ can be categorized into two types. In each case, $V(G)$ is partitioned into sets $A$ and $B$. The types are defined as follows:
\begin{itemize}
    \item Type 1a. 
    Let $|A| = \lfloor k/2 \rfloor - 1$ and $|B| = k - 1$.
    Each edge in $E(B)$ must be blue, and each edge in $E(A, B)$ must be red. Any coloring of $E(A)$ is allowed.
    \item Type 1b. Like Type 1a, but $E(B)$ is red, and $E(A, B)$ is blue.
    \item Type 2a. 
    Let $|A| = \lfloor k/2 \rfloor - 1$ and $|B| = k - 1$, and let $b \in E(B)$. Each edge in $E(B) \setminus \{b\}$ must be blue, and each edge in $E(A, B) \cup \{b\}$ must be red. Any coloring of $E(A)$ is allowed.
    \item Type 2b. Like Type 2a, but $E(B) \setminus {b}$ is red, and $E(A, B) \cup {b}$ is blue.
\end{itemize}

If $k$ is even, then $\RG$ can only have Type 1a/b colorings.
If $k$ is odd, then $\RG$ can have all types of colorings.
\end{theorem}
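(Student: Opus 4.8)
The plan is to prove the two directions separately: first that each of the four listed colorings is genuinely $(P_k, P_k)$-good, and then that these exhaust all good colorings of $K_r$, where $r = k + \lfloor k/2 \rfloor - 2$.

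For the first (easy) direction I would simply bound the longest monochromatic path in each type. Writing $m = |A| = \lfloor k/2 \rfloor - 1$ and $|B| = k-1$, in a Type 1a coloring the blue graph splits into the clique $K_{k-1}$ on $B$ and an arbitrary graph on $A$, so its longest blue path has $k-1 < k$ vertices. The red graph is complete bipartite between $A$ and $B$ (with $B$ red-independent) together with arbitrary edges inside $A$; since no two $B$-vertices are red-adjacent, any red path uses at most $|A|+1$ vertices of $B$ and hence at most $2m+1 = 2\lfloor k/2\rfloor - 1 < k$ vertices. Type 2a adds a single red edge inside $B$, which lets a red path collect one extra $B$-vertex, raising the bound to $2\lfloor k/2\rfloor$; this is $<k$ exactly when $k$ is odd, which is precisely the stated parity restriction and explains why Type 2 is forbidden for even $k$. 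Types 1b and 2b follow by exchanging the two colors.

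For the converse I would mirror the longest-path technique behind the Ramsey formula $R(P_k,P_\ell) = \ell + \lfloor k/2\rfloor - 1$~\cite{gerencser1967ramsey}. The starting observation is that $R(P_{k-1},P_{k-1}) = (k-1) + \lfloor (k-1)/2\rfloor - 1 \le r$, so $K_r \ra (P_{k-1}, P_{k-1})$ and every good coloring contains a monochromatic $P_{k-1}$, which we may take to be blue. Because the coloring has no blue $P_k$, this is a \emph{longest} blue path; let $B$ be its $k-1$ vertices and $A = V(K_r)\setminus B$, so that automatically $|A| = \lfloor k/2\rfloor - 1$, exactly the claimed partition sizes. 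I would then run the standard endpoint-and-rotation argument: the two endpoints of the path can have no blue edge into $A$ (it would extend the path to a blue $P_k$), and by rotating the path along blue chords one forces, step by step, that every edge of $E(A,B)$ is red and that $B$ induces an almost-complete blue graph. The final bookkeeping distinguishes the types: if $E(B)$ is entirely blue we are in Type 1a, a single red edge inside $B$ yields Type 2a, and the path-length computation from the easy direction shows this lone exceptional edge is tolerable only when $k$ is odd, matching the parity statement. Starting instead from a monochromatic red $P_{k-1}$ produces Types 1b and 2b.

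The main obstacle is the rotation argument in the converse. Guaranteeing the monochromatic $P_{k-1}$ and reading off $|A| = \lfloor k/2\rfloor -1$ is immediate, but turning ``the endpoints have no blue neighbours in $A$'' into the full rigidity—all of $E(A,B)$ red, $B$ a blue clique up to one edge, and $E(A)$ genuinely free—requires careful control of how path rotations generate new endpoints and new forbidden edges, together with a matching Erd\H{o}s--Gallai-type edge count to rule out configurations in which $A$ is larger or the blue core smaller than claimed. Pinning down that at most one edge of $E(B)$ may deviate (the difference between Type 1 and Type 2), and that this deviation is admissible precisely in the odd case, is the most delicate part of the bookkeeping.
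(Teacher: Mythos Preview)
The paper does not prove this theorem at all: it is stated as a result of Hook~\cite{Ho} and simply quoted (in the appendix) as a black box on which the transmitter constructions rely. There is therefore no proof in the paper to compare your proposal against.

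That said, your outline is a reasonable sketch of how such a classification is typically obtained. The forward direction is fine and your parity calculation correctly explains why Type~2 appears only for odd $k$. For the converse, your use of $R(P_{k-1},P_{k-1}) \le r$ to locate a monochromatic $P_{k-1}$ is sound, and taking its vertex set as $B$ gives the right cardinalities. But as you yourself note, the entire content of the theorem lives in the rotation/extension bookkeeping: showing that \emph{every} edge of $E(A,B)$ is red, that $B$ spans a blue clique with at most one defect, and that $E(A)$ is genuinely unconstrained. Your proposal names this step but does not carry it out, and this is exactly where Hook's original argument does the real work; until that is written down the proof is only a plan.
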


\begin{observation}
\label{obs:k-k-types}
Let $k \geq 4$ and $r = R(P_k, P_k) - 1$. 

\begin{itemize}
    \item In Type 1a/1b $(P_k, P_k)$-good colorings of $K_r$: (1) each vertex in $B$ is a strict endpoint of a blue/red $P_{k - 1}$, (2) when $k$ is even (resp., odd), each vertex in $B$ is a strict endpoint of a red/blue $P_{k - 1}$ (resp., $P_{k-2}$), and (3) when $k$ is even (resp., odd), each vertex in $A$ is a strict endpoint of a red/blue $P_{k - 2}$ (resp., $P_{k-3}$).
    \item In Type 2a/2b $(P_k, P_k)$-good colorings of $K_r$: (1) each vertex in $B$ is a strict endpoint of a blue/red $P_{k - 1}$, (2) each vertex in $B$ is a strict endpoint of a red/blue $P_{k - 1}$, and (3) each vertex in $A$ is a strict endpoint of a red/blue $P_{k - 2}$.
\end{itemize}
\end{observation}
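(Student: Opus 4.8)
The plan is to derive Observation~\ref{obs:k-k-types} directly from the rigid structure that Theorem~\ref{thm:CritK-eq} imposes on the good colorings of $K_r$, where $r = R(P_k,P_k)-1 = k + \lfloor k/2\rfloor - 2$. First I would exploit the red/blue symmetry between Type 1a and 1b and between Type 2a and 2b: interchanging the two colors carries a Type 1a (resp.\ 2a) coloring to a Type 1b (resp.\ 2b) coloring and back, so it suffices to verify every strict-endpoint claim for Type 1a and Type 2a and then read off the remaining cases by swapping ``red'' and ``blue.'' I would also record that Type 2a/2b occurs only for odd $k$, which eliminates several subcases. Writing $a := |A| = \lfloor k/2\rfloor - 1$ and $|B| = k-1$ (so indeed $|A|+|B| = r$), the one feature I must control throughout is that the coloring of $E(A)$ is completely unconstrained; hence each claimed strict-endpoint length has to be shown invariant under all $2^{\binom{a}{2}}$ colorings of $E(A)$.

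The key technical step is a longest-path computation in a ``complete split graph'': the graph on $A \sqcup B$ in which $A$ is a monochromatic clique, $B$ is independent in that color, and all $A$--$B$ edges carry that color. Since no two $B$-vertices are adjacent in this color, any path using $\alpha$ vertices of $A$ and $\beta$ of $B$ satisfies $\beta \le \alpha + 1$ when both endpoints lie in $B$, and $\beta \le \alpha$ when an endpoint lies in $A$; maximizing (using $|B| = k-1 > a$) gives a longest path on $2a+1$ vertices with both ends in $B$, and a longest path with an $A$-endpoint on $2a$ vertices with its other end in $B$. For Type 2a the single extra red edge $b$ inside $B$ relaxes these to $\beta \le \alpha + 2$ and $\beta \le \alpha + 1$, yielding $2a+2$ and $2a+1$. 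Substituting $a$ and splitting on parity turns these counts into exactly the exponents claimed: for even $k$, $2a = k-2$ and $2a+1 = k-1$; for odd $k$, $2a = k-3$, $2a+1 = k-2$, and (Type 2a only) $2a+2 = k-1$.

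With this lemma in hand I would assemble the argument. For the blue claims (item (1) of each type), $B$ induces a blue $K_{k-1}$ in Type 1a and a blue $K_{k-1}-b$ in Type 2a; since $k \ge 4$, both graphs admit a Hamiltonian path starting at any prescribed vertex, giving a blue $P_{k-1}$ with that vertex as an endpoint, and because $E(A,B)$ is red no blue path can leave $B$, so $P_{k-1}$ is also longest. For the red claims (items (2) and (3)) I apply the split-graph lemma to the red subgraph. The crucial point, and the step I expect to be the main obstacle, is the $E(A)$-invariance: the lower bounds are achieved using only $E(A,B)$ edges (together with $b$ for Type 2a), none of which touch $E(A)$, so the required red $P_{k-1}$, $P_{k-2}$, or $P_{k-3}$ exists under every coloring of $E(A)$; conversely the matching upper bounds must survive the extreme case in which $E(A)$ is colored entirely red, which is precisely what the split-graph bound controls, while good colorings forbid any red $P_k$. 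A sanity check confirms consistency: the Type-2a count $2a+2$ would equal $k$ for even $k$, which is exactly why Type 2 cannot occur for even $k$. Finally I would verify that each target length is attained \emph{at every} vertex of the relevant part, by choosing which $A$- and $B$-vertices populate the extremal path and which endpoint is distinguished (including the endpoints $p,q$ of $b$ in Type 2a), thereby establishing all three items for both types.
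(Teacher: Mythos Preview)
Your proposal is correct and follows essentially the same route as the paper. The paper packages your ``complete split graph'' longest-path computation as two standalone lemmas (Lemmas~\ref{lem:p2n1} and~\ref{lem:p2n}, giving the $2a+1$ and $2a$ bounds with $N$ independent and $M$ arbitrary), proves the Hamiltonian-path claims for $K_{k-1}$ and $K_{k-1}-e$ separately (Lemmas~\ref{lem:KkAllVPk} and~\ref{lem:Kk-eAllVPk}), and then simply says the observation is a ``simple application'' of these, with Type~2 handled by noting the extra red edge in $E(B)$ lengthens the red paths by one; your inline derivation, explicit use of the 1a/1b and 2a/2b color symmetry, and careful discussion of $E(A)$-invariance amount to the same argument spelled out in more detail.
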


We now present the proofs of Lemmas~\ref{lem:k-l-transmit} and~\ref{lem:k-k-transmit}.
The term $Ai$-vertex (resp., $Bi$-vertex) for $i \in\{1,2,3\}$ is used to refer to vertices belonging to set $A$ (resp., $B$) in a $\RG$ with a Type $i$ coloring, as defined in Theorem~\ref{thm:CritK-uneq}. \\

\noindent\textit{Proof of Lemma~\ref{lem:k-l-transmit}.}
We first consider the case where $k = 3$.
Let $G$ be the graph constructed by attaching a leaf vertex, $v$, to a $K_{\ell-1}$. It is easy to see that every vertex in a $(P_3, P_\ell)$-good coloring of $K_{\ell-1}$ is the strict endpoint of blue $P_{\ell-1}$. Thus, the leaf edge must be red, and $v$ is the strict endpoint of a red $P_2$ is all good coloring of $G$.
Now, suppose $k \geq 4$.
Let $r = R(P_k, P_\ell) - 1$. We consider three cases. \\

\noindent \textit{Case 1 ($k$ is even).} Covered in the main text. \\

\noindent \textit{Case 2 ($k$ is odd and $\ell > k + 1$).}
Type 1 and 2 colorings of $\RG$ are allowed in this case.
Let $H$ be the graph constructed by attaching a leaf node to $\lfloor k/2 \rfloor$ vertices of a $\RG$.
We refer to these leaf vertices as $L$-vertices. We now 
analyze the properties of $H$. 
First, note that at least one vertex in $B$ must be an $L$-vertex, since $|A| = \lfloor k/2 \rfloor - 1$. Recall that in both Type 1 and 2 colorings, a $Bi$-vertex is a strict endpoint of blue $P_{\ell-1}$.  
Thus, the leaf edge must be red to avoid making a blue $P_{\ell}$. This implies that the $\RG$ in $H$ must have a Type 1 coloring; otherwise, the red leaf edge and red $P_{k-1}$ from the Type 2 coloring would form a red $P_{k}$. Note that 
if at least two $L$-vertices were adjacent to vertices in $B$, then two red leaf edges and the red $P_{k-2}$ would form a red $P_k$. Thus, there is exactly one $L$-vertex adjacent to a vertex in $B$.
The red leaf edge adjacent to a $B$-vertex, along with the red $P_{k-3}$ (formed using edges in $E(A,B)$), makes each vertex in $A$ the strict endpoint of a red $P_{k-2}$.
$H$ can now emulate the role of $K_r$ with Type 1 colorings as in the previous case. 

Let $H'$ be the graph constructed by attaching a leaf node to $\lfloor k/2 \rfloor - 1$ vertices of an $\RG$.
Consider the graph $F$, where two disjoint $H'$'s are connected by a single edge $(u,v)$, where $u$ and $v$ are members of a $K_r$ and not adjacent to $L$-vertices.
As in the previous case, it is easy to see that
$u$ and $v$ must be $A1$-vertices, and $(u,v)$ must be blue.

We define the $(k,\ell,k-1)$-red-transmitter,
$G$, as the $(H', u, \ell-1)$-thread graph, where $u$ is a
member of a $K_r$ not adjacent to any $L$-vertex.
The argument from the previous case shows that the thread-end of $G$ is the strict endpoint of a red $P_{k-1}$.
Moreover, the coloring where each $E(A)$ is colored red and each edge 
between an $L$-vertex and an $A1$-vertex
is colored blue is a $(P_k, P_\ell)$-good coloring. \\

\noindent \textit{Case 3 ($k$ is odd and $\ell = k + 1$).}
Note that in this case, all three types of colorings of $\RG$ are allowed.
We consider two subcases.

\noindent\textit{Case 3.i ($k \geq 7$).}
Consider the graph $H$ where two $\RG$'s share a single vertex, $p$, and $q$ is a leaf vertex connected to $p$.
Let $X$ and $Y$ refer to the vertex set of each $\RG$.
We first show that $p$ cannot be a $Bi$-vertex for any $i \in \{1,2,3\}$ in both $X$ and $Y$. We prove via contradiction: assume without loss of generality that $p$ is a $Bi$-vertex in $X$.
Recall that a $Bi$-vertex is the endpoint of a blue $P_{\ell-1}$ and red $P_{k-2}$. Thus, $p$ must be an $Aj$-vertex in $Y$, for some $j\in \{1,2,3\}$. However, $A1$- and $A2$-vertices are endpoints of red $P_{k-3}$'s, and $A3$-vertices are endpoints of blue $P_{\ell-2}$'s. If $p$ is an $A1$- or $A2$-vertex, then a red $P_{k-3+k-2-1}$ is formed, which is forbidden when $k \geq 6$, or a blue $P_{\ell-2 + \ell - 1 - 1}$ is formed, which is forbidden when $\ell \geq 4$. Thus, $p$ cannot be a $Bi$-vertex in $X$ or $Y$.
Also note that if $p$ is an $A1$- or $A2$-vertex in both $X$ and $Y$, then a red $P_{k-3+k-3-1}$ is formed, which is forbidden when $k \geq 7$. Similarly, $p$ cannot be an $A3$-vertex in both $X$ and $Y$ otherwise a blue $P_{\ell - 2 + \ell - 2 - 1}$ is formed. 
Thus, $p$ must be an $A1$- or $A2$-vertex in $X$, and an $A3$-vertex in $Y$, or vice versa. This implies that $p$ must be the strict endpoint of a red $P_{k-3}$ (or $P_{k-2}$) and a blue $P_{\ell-2}$.

Let $G$ be the graph constructed as follows. Take two copies of $H$ and contract the vertices labeled $q$. Then, attach a leaf vertex, $v$, to the contracted vertex.
Let $p_1$ and $p_2$ be the vertices in the intersection of two $\RG$'s. Observe that $(p_1, q)$ and $(p_2, q)$ must be different colors, otherwise a red $P_{k-3+k-3 + 1}$ or a blue $P_{\ell-2 + \ell-2 + 1}$ is formed when both are red or blue, respectively. Assume without loss of generality that $(p_1, q)$ is red and $(p_2,q)$ is blue. Since $p_1$ and $p_2$ are the strict endpoints of blue $P_{\ell-2}$'s, $q$ must be the strict endpoint of a blue $P_{\ell-1}$, forcing $(q,v)$ to be red. Observe that $(p_1,q)$ is the strict endpoint of a red $P_{k-2}$ or $P_{k-1}$ depending on whether $p_1$ is an $A1$ or $A2$-vertex in one of the $\RG$'s. Clearly, $p_1$ must be an $A1$-vertex, otherwise a red $P_k$ is formed with $(q,v)$. Therefore, $v$ must be the strict endpoint of a red $P_{k-1}$.
Finally, we describe a good coloring of $G$: color $E(A)$ red if the $K_r$ has a Type 1 coloring, and blue if it has a Type 3 coloring.

\noindent\textit{Case 3.ii ($k < 7$).} 
Note that $k = 5$ and $\ell = 6$ is the only possibility in this case. 
The transmitter for this is shown in Figure~\ref{fig:kl-transmit}. 
Let $X, Y,$ and $Z$ be the vertex sets of the $K_r$'s as in Figure~\ref{fig:kl-transmit}.
First, note that $Z$ cannot have a Type 1 or Type 2 coloring. We prove via contradiction. 
Assume that $Z$ has a Type 1 or 2 coloring.
Since $k = 5$, $|A| = \lfloor 5/2 \rfloor = 1$ for Type 1 and 2 colorings, at least one vertex in $\{a,t\}$ must be a $B1$- or $B2$-vertex in $Z$. Assume without loss of generality that $a$ is a $B1$ or $B2$-vertex. Note that $a$ is the strict endpoint of a blue $P_{\ell-1}$ from $Z$ (Observation~\ref{obs:k-l-types}). Thus, each edge connected to $a$ in $X$ must be red. 
Since $a$ is not adjacent to any blue edge, it can only be an $A1$ or $A2$-vertex in $X$. Since $|A| = 1$ in Type 1 and 2 colorings, $b$ must be a $B1$ or $B2$-vertex and be the strict endpoint of a blue $P_{\ell-1}$, implying that $(b, c)$ is red. Since $a$ is also a strict endpoint of a red $P_{k-2}$ from $Z$, the edges $(a, b)$ and $(b, c)$ would form a red $P_{k}$. 
Thus, $Z$ must have a Type 3 coloring. Note that $a$ and $t$ must be $A3$-vertices in $Z$, otherwise the edges $(a, b)$ and $(t, u)$ would form red $P_k$'s or blue $P_\ell$'s.
$a$ (resp., $t$) must be an $A1$- or $A2$-vertex in $X$ (resp., $Y$) because every other type of vertex is the strict endpoint of a blue $P_{\ell-2}$ in $X$ (resp., $Y$). This would form a blue $P_{\ell -2 + \ell -2 }$ with the $P_{\ell-2}$ from $Z$.
Since $|A| = 1$ in Type 1 and 2 colorings, $b$ and $u$ must be in $B1$- or $B2$-vertices. $X$ and $Y$ must have Type 1 colorings otherwise $(b,c)$ and $(u, v)$ make red $P_k$'s, on account of $B2$-vertices being strict endpoints of red $P_{k-1}$'s and the leaf edges being red.
Thus, $b$ and $u$ are $A1$-vertices and are strict endpoints of red $P_{k-2}$'s, making $c$ and $v$ strict endpoints of red $P_{k-1}$'s. Finally, $(a,t)$ must be colored blue to obtain a good coloring of $G$.
\qed
\vskip 0.2cm

We now prove Lemma~\ref{lem:k-k-transmit}. Since the constructions are similar to that of the previous lemma, some details are skipped since the same arguments can be applied \textit{mutatis mutandis}. 

\vskip 0.2cm
\noindent\textit{Proof of Lemma~\ref{lem:k-k-transmit}.} 
When $k = 3$, $K_2$ is trivially a transmitter. Suppose $k \geq 4$. Let $r = R(P_k, P_k) - 1$. We consider two cases.

\noindent\textit{Case 1 ($k$ is even).} Recall that only Type 1a and 1b colorings are allowed in this case.
As in Case 3.i of Lemma~\ref{lem:k-l-transmit}'s proof, 
consider the graph $G$ where two $\RG$'s share a single vertex, $u$, and $v$ is a leaf vertex connected to $p$.
Let $X$ and $Y$ refer to the vertex set of each $\RG$.
Note that $p$ cannot be a $B1$-vertex in either $K_r$; 
a $B1$-vertex is the strict endpoint of a red and blue $P_{k-1}$ (Observation~\ref{obs:k-k-types}), and this would form a red/blue $P_k$ with $(p,q)$. 
Thus, $q$ is a Type 1 vertex in $X$ and $Y$.
Also, note that $q$ must be a Type 1a vertex in $X$ and a Type 1b vertex in $Y$ (or vice versa).
Otherwise, if both $X$ and $Y$ are of the same type, then a red/blue $P_{k-2 + k-2}$ is formed, which is forbidden when $k \geq 4$.
Thereby, $q$ is the strict endpoint of a red and blue $P_{k-2}$ in all good colorings of $G$. 
Moreover, $v$ is the strict endpoint of a red/blue $P_{k-1}$ depending on the color of $(p,v)$.
Thus, $G$ is a $(k, k-1)$-transmitter where $v$ is the strict endpoint of a red/blue $P_{k-1}$ in all good colorings of $G$.
Finally, note that $G$ is $(P_k, P_k)$-good since all edges in $E(A)$ can be colored red (resp., blue) in the $K_r$ with the Type 1a (resp., Type 1b) coloring. \\

\noindent\textit{Case 2 ($k$ is odd).}
Let 
$H$ be the graph constructed by attaching a leaf node to $\lfloor k/2 \rfloor - 1$ vertices of a $\RG$.
Consider the graph $G$, constructed by taking two copies of $H$, contracting two vertices not adjacent to $L$-vertices, and attaching a leaf vertex, $v$, to the contracted vertex, denoted as $u$. 
Let $X$ and $Y$ refer to the vertex set of each $\RG$.
As argued in Case 2 of Lemma~\ref{lem:k-l-transmit}'s proof, both $K_r$'s in $G$ must have Type 1 colorings. 
Moreover, $v$ is the strict endpoint red/blue $P_{k-2}$ from both $K_r$'s.
As argued in the even case, $X$ and $Y$ must have different types of colorings to avoid a red/blue $P_k$ going across both $H$'s.
Thus, $G$ is $(k,k-1)$-transmitter, and $v$ is the strict endpoint of a red/blue $P_{k-1}$ in all good colorings of $G$.
Finally, note that $G$ is $(P_k, P_k)$-good using the coloring from the previous case.
\qed 

\subsection*{Proofs for Observations~\ref{obs:k-l-types} and~\ref{obs:k-k-types}}

The observations are simple corollaries of the following lemmas.

\begin{lemma}
\label{lem:p2n1}
Let $G$ be a graph whose vertices are partitioned into $M$ and $N$ such that (1) $|M| = k$ and $|N| \geq k+1$, (2) $E(M,N)$ includes all possible edges, and (3) $N$ is an independent set.
Then, each vertex in $N$ is an endpoint of a 
$P_{2k+1}$.
Moreover, this is the largest path in $G$ that each $n \in N$ is the endpoint of.
\end{lemma}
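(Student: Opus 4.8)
The plan is to analyze the structure of the graph $G$ described in Lemma~\ref{lem:p2n1}. We have a complete bipartite-like structure between $M$ (size $k$) and $N$ (with $N$ independent and $|N| \geq k+1$), plus arbitrary edges inside $M$. The key insight is that any path can only ``leave'' $N$ by going into $M$, and since $N$ is independent, consecutive vertices of a path cannot both lie in $N$. So the path must alternate in a constrained way: every vertex of $N$ on the path is sandwiched between vertices of $M$ (or is an endpoint). This immediately caps how many $N$-vertices a path can contain relative to how many $M$-vertices it uses.

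First I would prove the upper bound. Consider any path $Q$ in $G$ and let it use $a$ vertices from $M$ and $b$ vertices from $N$. Since $N$ is independent, no two $N$-vertices are adjacent on $Q$, so the $b$ vertices from $N$ partition $Q$ into blocks separated by $M$-vertices. A standard counting argument gives $b \leq a + 1$ (the $N$-vertices are separated by $M$-vertices, with at most one extra $N$-vertex at each end). Since $a \leq |M| = k$, we get $b \leq k+1$, so the total length is $|V(Q)| = a + b \leq k + (k+1) = 2k+1$. If $n \in N$ is an endpoint, this bound of $2k+1$ still holds, establishing that no path from an $N$-vertex exceeds $P_{2k+1}$.

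Next I would exhibit a $P_{2k+1}$ starting at an arbitrary $n \in N$ to show the bound is achieved. The natural construction alternates between $N$ and $M$: pick $n = n_0 \in N$, then $m_1 \in M$, then $n_1 \in N$, then $m_2 \in M$, and so on, ending with $m_k, n_k$. This uses all $k$ vertices of $M$ once each and $k+1$ distinct vertices from $N$, which is possible precisely because $|N| \geq k+1$. Every consecutive pair is an $M$--$N$ edge, and all such edges exist by condition (2), so this is a valid path on $2k+1$ vertices with $n$ as an endpoint.

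The main obstacle I anticipate is getting the counting inequality $b \le a+1$ airtight, since it is the crux of the upper bound and is easy to state loosely. The clean way is to traverse the path and observe that between any two consecutive $N$-vertices there is at least one $M$-vertex (independence of $N$), so deleting the $N$-vertices leaves the $a$ $M$-vertices separating the $b$ $N$-vertices into at most $a+1$ groups of size at most one each. I would make sure to handle the endpoint cases (an $N$-vertex at one or both ends of $Q$) explicitly in this argument, as those are exactly the configurations that could, if mishandled, appear to violate the bound. Everything else is routine, and the matching lower-bound construction confirms tightness.
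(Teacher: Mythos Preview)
Your proposal is correct and matches the paper's approach: both construct the same alternating path $n_0, m_1, n_1, \ldots, m_k, n_k$ for the lower bound, and both use the independence of $N$ to cap the path length at $2k+1$. The only cosmetic difference is that the paper phrases the upper bound as an edge-counting contradiction (at least $k+2$ vertices of a hypothetical $P_{2k+2}$ lie in $N$, forcing $\geq 2k+2$ edges), whereas you state the equivalent inequality $b \leq a+1$ directly.
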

\begin{proof}
Let $m_i \in M$ and $n_i \in N$. We first show that 
each $n_i \in N$ is the endpoint of $P_{2k+1}$. Consider the path
$n_1, m_1, n_2, \ldots, m_k, n_{k+1}$, which alternates between vertices in $M$ and $N$. Clearly this path is of size $2k+1$, and must exist because $|M| \geq k+1$ and all edges in $E(M,N)$ exist. 
It is easy to see that the vertices may be relabeled so that any vertex in $N$ can be the endpoint of this $P_{2k+1}$.
We now show that no path larger than $2k+1$ exists in $G$, via contradiction. Assume there exists a path $Q$ on $2k+2$ vertices in $G$. Since $|M| = k$, at least $2k+2 - |M| = k+2$ vertices of $Q$ must be in $N$.
Recall that in any $P_\ell$ there are two vertices of degree one and $\ell-2$ vertices of degree two, and there are a total of $\ell-1$ edges.
If at least $k+2$ vertices in $N$ are in $Q$, then $1 + 1 + 2k$ edges of $Q$ must be in $E(M,N)$, since $N$ is an independent set. However, this implies that $Q = P_{2k+2}$ has at least $2k+2$ edges, which is a contradiction.
\end{proof}

\begin{lemma}
\label{lem:p2n}
Let $G$ be a graph who vertices are partitioned into $M$ and $N$ such that (1) $|M| = k$ and $|N| \geq k$, (2) $E(M,N)$ includes all possible edges, and (3) $N$ is an independent set.
Then, each vertex in $M$ is an endpoint of a
$P_{2k}$.
Moreover, this is the largest path in $G$ that each $m \in M$ is the endpoint of.
\end{lemma}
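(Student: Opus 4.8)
The plan is to mirror the proof of Lemma~\ref{lem:p2n1}, adjusting the edge-counting bound to reflect that the prescribed endpoint now lies in $M$ rather than in $N$. Write $M = \{m_1, \ldots, m_k\}$ and fix any $k$ distinct vertices $n_1, \ldots, n_k \in N$, which exist since $|N| \geq k$.

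First I would establish existence. Exhibit the alternating path $m_1, n_1, m_2, n_2, \ldots, m_k, n_k$, which visits all $k$ vertices of $M$ and $k$ vertices of $N$ and is therefore a $P_{2k}$ with $m_1$ as an endpoint; every edge it uses lies in $E(M,N)$ and so exists by hypothesis (2). Relabeling $M$ shows that any prescribed $m \in M$ can be taken as the endpoint.

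Next I would prove maximality by contradiction. Suppose some path $Q$ on $2k+1$ vertices has an endpoint $m \in M$. Since $|M| = k$, at least $(2k+1) - k = k+1$ vertices of $Q$ lie in $N$; call this count $t \geq k+1$. Because $N$ is independent (hypothesis (3)), no edge of $Q$ joins two $N$-vertices, so every edge of $Q$ incident to an $N$-vertex lies in $E(M,N)$ and the number of such edges equals $\sum_{n \in N \cap Q} \deg_Q(n)$. The crucial difference from Lemma~\ref{lem:p2n1} is that one of the two endpoints of $Q$ is the $M$-vertex $m$, so at most one $N$-vertex has path-degree $1$ while the other $t-1$ have path-degree $2$; hence this sum is at least $2(t-1) + 1 = 2t - 1 \geq 2k+1$. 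But $Q = P_{2k+1}$ has only $2k$ edges, a contradiction. Thus no path on more than $2k$ vertices has an endpoint in $M$.

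The only place where the argument genuinely departs from Lemma~\ref{lem:p2n1} — and hence the step to handle carefully — is this edge count: forcing one endpoint of $Q$ into $M$ eliminates one potential degree-one $N$-vertex, which is exactly what shifts the contradiction threshold from the value used for $P_{2k+1}$ down to the one needed for $P_{2k}$. Everything else, including the existence construction and the appeal to the independence of $N$, transfers \emph{mutatis mutandis}.
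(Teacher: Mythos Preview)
Your proof is correct and essentially identical to the paper's own argument: both exhibit an alternating $M$/$N$ path of length $2k$ and then derive a contradiction from a hypothetical $P_{2k+1}$ with an endpoint in $M$ by counting the edges incident to the (at least $k+1$) $N$-vertices on the path, using that at most one of them can have path-degree~$1$. The only cosmetic difference is that the paper starts its existence path at an $N$-vertex rather than an $M$-vertex.
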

\begin{proof}
Let $m_i \in M$ and $n_i \in N$. We first show that 
each $m_i \in M$ is the endpoint of $P_{2k}$. Consider the path
$n_1, m_1, n_2, m_2, \ldots, n_k, m_k$, which alternates between vertices in $M$ and $N$. Clearly this path is of length $2k$, and must exist because $|M| = k$ and all edges in $E(M,N)$ exist. It is easy to see that the vertices may be relabeled so that any vertex in $M$ can be the endpoint of this $P_{2k}$.
We now show that no path larger than $2k$ with an endpoint in $M$ exists in $G$, via contradiction. 
Assume there exists a path $Q$ on $2k+1$ vertices in $G$, with one endpoint in $M$.
Since $|M| = k$, at least $2k+1 - |M| = k+1$ vertices of $Q$ must be in $N$.
Recall that in any $P_\ell$ there are two vertices of degree one and $\ell-2$ vertices of degree two, and there are a total of $\ell-1$ edges.
If at least $k+1$ vertices in $N$ are in $Q$ and at most one of them may be an endpoint of $Q$, then $1 + 2k$ edges of $Q$ must be in $E(M,N)$, since $N$ is an independent set. However, this implies that $Q = P_{2k+1}$ has at least $2k+1$ edges, which is a contradiction.
\end{proof}

\begin{lemma}
\label{lem:KkAllVPk}
For all $k \geq 1$, each vertex in $K_k$ is an endpoint of a $P_k$.
\end{lemma}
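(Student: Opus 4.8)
The plan is to prove Lemma~\ref{lem:KkAllVPk} directly by induction on $k$, constructing an explicit Hamiltonian path in $K_k$ whose endpoint is any prescribed vertex. The statement is that every vertex of $K_k$ is the endpoint of some $P_k$, i.e., a path visiting all $k$ vertices. Since $K_k$ is complete, any ordering of its vertices is a valid path, so the real content is merely that a path on all $k$ vertices exists and can be made to start at an arbitrary vertex.

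First I would handle the base case $k = 1$, where $K_1$ is a single vertex, which is trivially a $P_1$ with that vertex as its (degenerate) endpoint. For the inductive step, assume every vertex of $K_{k-1}$ is an endpoint of a $P_{k-1}$. Given $K_k$ and a target vertex $v \in V(K_k)$, I would pick any other vertex $w \neq v$ and consider the subgraph induced on $V(K_k) \setminus \{v\}$, which is a $K_{k-1}$. By the inductive hypothesis, there is a $P_{k-1}$ in this $K_{k-1}$ having $w$ as an endpoint. Since $K_k$ is complete, the edge $(v, w)$ exists, so prepending $v$ to this path yields a $P_k$ on all $k$ vertices with $v$ as its endpoint.

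Alternatively, and perhaps more cleanly, I would avoid induction entirely by observing that in $K_k$ all $\binom{k}{2}$ edges are present, so \emph{any} permutation $u_1, u_2, \ldots, u_k$ of $V(K_k)$ traces out a path: each consecutive pair $(u_i, u_{i+1})$ is an edge because the graph is complete. To make a given vertex $v$ an endpoint, simply list it first in the permutation. This gives a $P_k$ with $v$ as endpoint in one step.

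I do not expect any genuine obstacle here, as this lemma is essentially a triviality about complete graphs and serves only as a building block for the proofs of Observations~\ref{obs:k-l-types} and~\ref{obs:k-k-types} (in conjunction with Lemmas~\ref{lem:p2n1} and~\ref{lem:p2n}). The only point requiring minor care is the degenerate case $k = 1$, where the notion of ``endpoint of a $P_1$'' must be read as the single vertex itself; I would state this explicitly to keep the base case rigorous. The direct permutation argument is likely the form I would present, since it is shorter than the induction and makes the completeness hypothesis do all the work.
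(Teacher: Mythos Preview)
Your proposal is correct and your inductive argument is essentially identical to the paper's proof: the paper handles $k=1$ as trivial, then for the inductive step removes the target vertex $v$, invokes the hypothesis to find a $P_{k-1}$ in the remaining $K_{k-1}$, and attaches $v$ to an endpoint of that path via completeness. Your alternative permutation argument is also valid and slightly shorter, but the paper opts for the induction.
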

\begin{proof}
The statement is trivially true for $k = 1$. Assume it is true for all $k < n$. For $k = n$, let $v$ be any vertex in $V(K_n)$. By the inductive hypothesis, there must be a $P_{n-1}$ in $V(K_n) \setminus \{v\}$. $v$ must be connected to an endpoint of said $P_{n-1}$, implying that $v$ is an endpoint of a $P_n$.
\end{proof}

\begin{lemma}
\label{lem:Kk-eAllVPk}
For all $k \geq 4$, every vertex in $K_k-e$ is an endpoint of a $P_k$. 
\end{lemma}
\begin{proof}
Let $v$ and $w$ be the only two vertices in $K_k - e$ that do not share an edge. 
Let $u$ be any vertex in $V(K_k - e) \setminus \{v,w\}$. 
Let $V' = V(K_k - e) \setminus \{u,v,w\}$. 
Since $k \geq 4$, $|V'| \geq 1$ and $G[V']$ is the complete graph $K_{k-3}$. Clearly, $K_{k-3}$ must have a $P_{k-3}$. 
Observe that $w$, the $P_{k-3}$ in $G[V']$, $u$, and $v$ form a $P_k$ with endpoints $w$ and $v$.
Moreover, $u$, $v$, the $P_{k-3}$, and $w$ form a $P_k$ with endpoints $u$ and $w$.
\end{proof}

In Type 1/1a/1b and Type 3 colorings, the observations made are simple applications of the lemmas above.
For Type 2/2a/2b colorings, it is easy to see that the extra red edge in $E(B)$ increases the length of the red paths in $K_r$.
\end{document}